\documentclass[ twocolumn, notitlepage,superscriptaddress, amsmath,amssymb, aps,pra]{revtex4-1}
\usepackage{tikz}
\usepackage{amscd}
\usepackage{amsthm}
\usepackage{graphicx}
\usepackage{mathdots}
\usepackage{epstopdf}
\usepackage{enumerate}
\usepackage{changepage}

\usepackage[
colorlinks,
linkcolor = blue,
citecolor = blue,
urlcolor = blue]{hyperref}
\def \qed {\hfill \vrule height7pt width 7pt depth 0pt}

\newtheorem{theorem}{Theorem}
\newtheorem{corollary}{Corollary}
\newtheorem{proposition}{Proposition}
\newtheorem{lemma}{Lemma}
\newtheorem{example}{Example}

\usepackage{dcolumn}
\usepackage{bm}
\usepackage{bbm}

\def\<{\langle}
\def\>{\rangle}

\DeclareMathAlphabet\mathbfcal{OMS}{cmsy}{b}{n}

\begin{document}

\title{Beyond the mixture of  generalized    Pauli  dephasing channels }

\author{Mao-Sheng Li}
\email{li.maosheng.math@gmail.com}
\affiliation{ School of Mathematics,
	South China University of Technology, Guangzhou
	510641,  China}

\author{Wen Xu}
 
\affiliation{ School of Mathematics,
	South China University of Technology, Guangzhou
	510641,  China}

\author{Yan-Ling Wang}
\email{wangylmath@yahoo.com}
\affiliation{ School of Computer Science and Technology, Dongguan University of Technology,Dongguan
	523808, China }

\author{Zhu-Jun Zheng}
\email{zhengzj@scut.edu.cn}
\affiliation{ School of Mathematics,
	South China University of Technology, Guangzhou
	510641,  China}

\begin{abstract}
	 In recent times, there has been a growing scholarly focus on investigating the intricacies of quantum channel mixing.
	  It has been commonly believed, based on intuition in the literature, that every generalized Pauli channel with dimensionality $d$ could be represented as a convex  combination of $(d+1)$ generalized Pauli dephasing channels (see [\href{https://journals.aps.org/pra/abstract/10.1103/PhysRevA.103.022605} {Phys. Rev. A \textbf{103}, 022605 (2021)}] as a reference). To our surprise, our findings indicate  the inaccuracy of this intuitive perspective. This has stimulated our interest in exploring  the properties of convex combinations of generalized Pauli channels, beyond the restriction to just $(d+1)$ generalized Pauli dephasing channels.  	We demonstrate that many previously established properties still hold within this broader context.  For instance, any mixture of invertible generalized Pauli channels retains its invertibility. It's worth noting that this property doesn't hold when considering the Weyl channels setting. Additionally, we demonstrate that every Pauli channel (for the case of $d=2$) can be represented as a mixture of $(d+1)$ Pauli dephasing channels, but this generalization doesn't apply to higher dimensions. This highlights a fundamental distinction between qubit and general qudit cases. In contrast to prior understanding, we show that non-invertibility of mixed channels is not a prerequisite for the resulting mapping to constitute a Markovian semigroup.

\end{abstract}      

\flushbottom

\maketitle

\thispagestyle{empty}

 \section{Introduction}

 The temporal evolution of a quantum system is elegantly described  through a family of completely positive and trace-preserving (CPTP) mappings denoted as $\Lambda_t ~(t\geq0)$, satisfying the initial condition $\Lambda_0=\mathbbm{1}$. These dynamical mappings are commonly referred to as quantum channels. Given an initial state $\rho$ of the system, the state $\rho_t= \Lambda_t(\rho)$ characterizes the evolution of $\rho$.
 
 In the context of a closed quantum system, the evolution of quantum states is governed by unitary transformations, which can be succinctly expressed as $\Lambda_t(\rho) = U_t \rho U_t^\dag$, where $U_t=e^{-\mathrm{i}Ht}$ with $H$ signifying the Hamiltonian of the enclosed system (with $\hbar=1$ set as a convention).  In contrast, when we shift our focus to open quantum systems \cite{Bre2002,Wei2000,Riv2011}, interactions between the system and an external environment introduce nontrivial effects, leading to phenomena such as dissipation, decay, and decoherence \cite{Riv2014,Bre2016,Veg2017}. Consequently,   the dynamical evolution no longer strictly follows the principles of unitarity. In such scenarios, it is customary to employ a suitable Born-Markov approximation to describe the evolution. This approximation is governed by the Markovian semigroup master equation
$$\dot{\Lambda}_t=\mathcal{L}\circ\Lambda_t,$$
where $\mathcal{L}$ is the time-independent generator given by
\begin{equation}\label{2}
\mathcal{L}(\rho)=-\mathrm{i}[H,\rho]+\sum\limits_{\alpha}\gamma_{\alpha}\left(V_{\alpha}\rho V_{\alpha}^{\dag}-\frac{1}{2}\{V_{\alpha}^{\dag}V_{\alpha},\rho\}\right),
\end{equation}
with $V_{\alpha}$ being the noise operators, and $\gamma_{\alpha}\geq0$ representing the local decoherence rates \cite{Gor1976,Lin1976}. Solving this master equation yields the dynamical map as $\Lambda_t=e^{t\mathcal{L}}$, ensuring that it remains a legitimate CPTP dynamical map.   A crucial feature of Markovian semigroup dynamics is that $\Lambda_{s+t}=\Lambda_s \circ \Lambda_t$ for all $s,t \geq 0.$  Going beyond the constraints of the Markovian semigroup, we frequently encounter scenarios where the generator $\mathcal{L}$ becomes time-dependent $\mathcal{L}_t$.    It has exactly the same form as (\ref{2}) with time-dependent $\mathcal{H}(t), V_{\alpha}(t)$,  and  $\gamma_{\alpha}(t)$. The formal solution for $\Lambda_t$ is
\begin{equation}\label{formal_solution}
\Lambda_t=\mathcal{T}\exp{\left(\int_{0}^{t}\mathcal{L}_\tau\mathrm{d}\tau\right)},
\end{equation}
where $\mathcal{T}$ is the chronological time-ordering operator \cite{Riv2010}. 
It is crucial to note that such a dynamical map is significantly more complex, and we currently lack a complete understanding of the necessary and sufficient conditions for $\mathcal{L}_t$ that ensure (\ref{formal_solution}) remains a CPTP map for all $t\geq0$. Recently, time-dependent generators $\mathcal{L}_t$ have been widely employed to investigate quantum non-Markovian evolutions \cite{Riv2014,Bre2016,Veg2017,Cre2007,Chr2022}.

Dynamical maps $\Lambda_t$ are said to be divisible if they can be represented as
\begin{equation}
	\Lambda_t=V_{t,s}\circ\Lambda_s,
\end{equation}
with $V_{t,s}$ serving as the intermediate map for all $t\geq s$. 
Furthermore, these maps are categorized as P-divisible if $V_{t,s}$ is both positive and trace-preserving (PTP), and CP-divisible if $V_{t,s}$ is completely positive and trace-preserving (CPTP). CP-divisibility of $\Lambda_t$ is frequently employed as a criterion to define Markovianity \cite{Riv2010}. 

Recently, there has been a surge in interest in exploring the properties arising from the convex combination of quantum channels  \cite{Wol2008,Chr2010,Wud2015,Wud2016,Meg2017,Siudzinska_JPA20,Siudzinska_JPA22b,Bre2018,Jag2020-1,Jag2020-2}. Notably, Pauli channels and generalized Pauli channels, two well-established channel types, have been extensively investigated  \cite{Chr2016,Siudzinska21,Utagi21,Jagadish2022,Jagadish_Measure,Siudzinska_JPA22}. Different mixing ways of them may lead to  the emergence of intriguing properties of the resultant maps, such as the  Markovianity,  Markov semigroup structure,   and  singularity \cite{Hou2021}. Several interesting works \cite{Siudzinska21,Siudzinska_JPA22} are predicated on the intuition that every generalized Pauli channel with dimensionality $d$ can be expressed as a convex combination of $(d + 1)$ generalized Pauli dephasing channels. In this work, we challenge this intuition and demonstrate that it does not hold, motivating us to comprehensively investigate the properties arising from the convex combination of general generalized Pauli channels. 

 The paper is organized as follows. In Sec. \ref{second}, we give some notations and introduce the definition of generalized Pauli channels. In Sec. \ref{three}, we study the properties of the resultant maps of a convex combination of   generalized Pauli channels, such as Markovian semigroup, invertibility, subsets relations. Finally, we conclude our findings in Section \ref{four}.

 \section{Preliminaries}\label{second}

 In this paper, we adopt the following notations: the set $[n]$ is represented as ${1,2,\cdots, n}$ for any positive integer $n$. The Hilbert space of dimension $d$ is denoted as $\mathcal{H}_d$. Furthermore, we use $\mathbb{D}_d$ to denote the set of all density matrices associated with the system   $\mathcal{H}_d$, while $\mathbb{L}_d$ is employed to represent the set of all linear operations acting from $\mathcal{H}_d$ to itself.

 A mixed unitary evolution of a qubit is precisely described by the Pauli channel  expressed  by
\begin{equation}\label{Pauli}
	\Lambda_t^{(\mathbf{p})}[\rho]=\sum_{\alpha=0}^3 p_\alpha (t) \sigma_\alpha \rho \sigma_\alpha,
\end{equation}
where ${\mathbf{p}}=(p_0(t),p_1(t),p_2(t),p_3(t))$ and $p_\alpha(t)$ denote the probability distribution with $p_0(0)=1$, and $\sigma_\alpha$ are the Pauli matrices.
This formulation represents the most general form of a bistochastic quantum channel \cite{King,Landau}.

A natural extension of Pauli channels to higher dimensions arises through the concept of mutually unbiased bases (MUBs) \cite{Bandyopadhyay02}. Let $d\geq 2$ be an integer. Two normalized orthogonal bases $\{|\psi_{i}\rangle\}_{i=1}^d$ and  $\{|\phi_{j}\rangle\}_{j=1}^d$ of $\mathcal{H}_d$ are  called  mutually unbiased if $|\langle \psi_i|\phi_j\rangle|^2 =\frac{1}{d}$ for all $i,j\in [d].$  Suppose that there exists $d+1$ MUBs $\mathcal{B}_\alpha=\{|\phi_k^{(\alpha)}\rangle\}_{k=1}^d$ where $\alpha=1,2,\cdots, d+1,$ which have been  known to be held  when $d$ is an integer power of some prime numbers \cite{Bandyopadhyay02}. For each $\alpha \in [d+1], $ we can define    a unitary operator 
\begin{equation}\label{eq:UU}
	U_{\alpha}=\sum_{l=0}^{d-1}\omega_d^{l}|\phi_k^{(\alpha)}\rangle\langle \phi_k^{(\alpha)}|,\qquad \omega_d=e^{\frac{2\pi \mathrm{i}}{d}}.
\end{equation}
Given a time dependent probability distribution $\mathbf{p}=(p_0(t),p_1(t),\cdots,p_{d+1}(t))$ with $p_0(0)=1$ where each $ p_\alpha(t)$ is  also assumed to be a continuous function on $[0,\infty),$ one   defines  the generalized Pauli  channel  as 
\begin{equation}\label{GPC2}
	\Lambda_t^{(\mathbf{p})}[\rho]=p_0(t) \rho+\frac{1}{d-1}\sum_{\alpha=1}^{d+1}p_\alpha(t)\mathbb{U}_\alpha[\rho],
\end{equation}
where
$	\mathbb{U}_\alpha[\rho]=\sum_{k=1}^{d-1}U_{\alpha}^k\rho{U_{\alpha}^k}^\dagger.$
The dynamical map $\Lambda_t^{(\mathbf{p})}$ is also assumed to satisfy the   equation 
$
\dot{\Lambda}_t^{(\mathbf{p})} [\rho] = {\mathcal{L}}_t^{(\mathbf{p})}  \circ {\Lambda}_t^{(\mathbf{p})} [\rho],$
where  
$${\mathcal{L}}_t^{(\mathbf{p})}[\rho]=\frac{1}{d}\sum_{\alpha=1}^{d+1} \gamma_\alpha(t)
\left( \mathbb{U}_\alpha[\rho]-(d-1) \rho\right),$$
which is called time-local generator of $\Lambda_t^{(\mathbf{p})}.$ If all $\gamma_\alpha(t)$ are non-negative constants, $\Lambda_t^{(\mathbf{p})}$ forms a semigroup: $\Lambda^{(\mathbf{p})}_{s+t}=\Lambda^{(\mathbf{p})}_s \circ \Lambda^{(\mathbf{p})}_t$ for all $s,t \geq 0$   and we call it a Markovian semigroup. We  denote $\mathcal{P}_d$   the set of all generalized Pauli channels and $\mathcal{S}_d$ the set of all  Markovian semigroups in $\mathcal{P}_d$.

 Now, the eigenvalue equations for $\Lambda_t^{(\mathbf{p})}$ read $\Lambda_t^{(\mathbf{p})}[\mathbb{I}_d]=\mathbb{I}_d$ and $
	\Lambda_t^{(\mathbf{p})}[U_{\alpha}^k]=\lambda_\alpha(t) U_{\alpha}^k
$
with 
\begin{equation}\label{eq:eiglambda}
\lambda_\alpha(t)=1-\frac{d}{d-1}\left(\sum_{\beta=1}^{d+1} p_\beta(t)-p_\alpha(t)\right)
\end{equation}
for each $\alpha\in[d+1]$  and $k\in [d-1].$

 For each $\alpha\in [d+1]$  and  a differentiable function $\pi_\alpha(t)$ on $[0,\infty)$ with $\pi_\alpha(0)=0$, we can define a probability distribution 
$$ \mathbf{p}_{\alpha, \pi_\alpha}:=(1-\pi_\alpha(t),0,\cdots,0, \pi_\alpha(t),0,\cdots,0),$$
where $\pi_\alpha(t)$ is at the $(\alpha+1)$-{th} coordinate. Set $$\mathcal{D}_{d,\alpha}:=\{\Lambda_t^{(\mathbf{p} )} \in \mathcal{P}_d\mid  \Lambda_t^{(\mathbf{p} )} = \Lambda_t^{(\mathbf{p}_{\alpha,\pi_\alpha})} \text{ for some }  \pi_\alpha(t) \},$$
and we call it the $\alpha$-th  generalized Pauli dephasing channels.  And we define $\mathcal{D}_d$ as the set of  convex combinations of elements in $\cup_{\alpha=1}^{d+1} \mathcal{D}_{d,\alpha}.$ 
Note that $\mathcal{D}_{d,\alpha}$ is a convex set for every $\alpha\in [d+1].$  As a consequence, every element $\Lambda_t^{(\mathbf{p})} \in \mathcal{D}_d$ can be written as 
\begin{equation}\label{eq:decomD}
	\Lambda_t^{(\mathbf{p})}=\sum_{\alpha=1}^{d+1} x_\alpha \Lambda_t^{(\mathbf{p}_{\alpha,\pi_\alpha})},
\end{equation} 
where $x_\alpha\geq 0$ and $\sum_{\alpha=1}^{d+1} x_\alpha=1.$ In fact, suppose that 
\begin{equation}\label{eq:mixdecomD}
	\Lambda_t^{(\mathbf{p})}=\sum_{\alpha=1}^{d+1} \sum_{j=1}^{N_\alpha} x_{\alpha,j} \Lambda_t^{(\mathbf{p}_{\alpha,\pi_{\alpha,j}})},
\end{equation} 
where $x_{\alpha,j}\geq 0$ and $\sum_{\alpha=1}^{d+1} \sum_{j=1}^{N_\alpha} x_{\alpha,j}=1.$ Then we define $x_\alpha=\sum_{j=1}^{N_\alpha} x_{\alpha,j}.$ If $x_\alpha>0$, we define
$$ \pi_{\alpha}(t):=\sum_{j=1}^{N_\alpha} \frac{x_{\alpha,j}}{x_\alpha}  \pi_{\alpha,j}(t), $$ otherwise, define $ \pi_{\alpha}(t):=0.$ Then expression \eqref{eq:mixdecomD} can be rewritten as Eq. \eqref{eq:decomD} under these definitions.

The complete positivity conditions for $\Lambda_t^{(\mathbf{p})}$ are the generalized Fujiwara-Algoet conditions \cite{Fujiwara,Ruskai}
\begin{equation}\label{eq:FA}
	-\frac{1}{d-1}\leq\sum_{\alpha=1}^{d+1}\lambda_\alpha (t)\leq 1+d\min_\alpha\lambda_\alpha(t).
\end{equation}



\noindent
 \section{Properties of  generalized Pauli channels}\label{three}
In this section, we first demonstrate that not every generalized Pauli channel with dimensionality $d$ can be represented as a convex combination of $(d + 1)$ generalized Pauli dephasing channels. Then, we shift our focus to investigating the properties that emerge from the convex combination of general generalized Pauli channels.
\begin{proposition}\label{pro:genPauli}
	Given an integer $d\geq 2$  and a probability distribution $\mathbf{p}=(p_0(t),p_1(t),\cdots,p_{d+1}(t)),$  the generalized Pauli channel $\Lambda_t^{(\mathbf{p})}$ belong to $\mathcal{D}_d$  if and only if 
\begin{equation}\label{eq:lessthan1}
	\sum_{\alpha=1}^{d+1}\left(\sup_{t\geq 0}  p_\alpha(t) \right) \leq 1.
\end{equation}  
	As a consequence, the set $\mathcal{D}_d$ is strictly containing in $\mathcal{P}_d.$ That is, $\mathcal{D}_d \subseteq \mathcal{P}_d$ and there exists $\Lambda_t^{(\mathbf{p})}\in \mathcal{P}_d$ such that $\Lambda_t^{(\mathbf{p})}\notin \mathcal{D}_d.$
\end{proposition}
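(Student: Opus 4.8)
The plan is to reduce membership in $\mathcal{D}_d$ to an explicit system of scalar equations and then read off the inequality \eqref{eq:lessthan1}. The starting observation is that a generalized Pauli channel is completely determined by its probability vector $(p_1(t),\dots,p_{d+1}(t))$: indeed, \eqref{eq:eiglambda} expresses the eigenvalues $\lambda_\alpha(t)$ as an invertible affine function of this vector, so two channels in $\mathcal{P}_d$ coincide if and only if their probability vectors coincide. Using \eqref{GPC2}, I would compute the probability vector of a convex combination $\sum_{\alpha=1}^{d+1} x_\alpha \Lambda_t^{(\mathbf{p}_{\alpha,\pi_\alpha})}$ of dephasing channels: since the $\alpha$-th dephasing channel contributes $\frac{1}{d-1} x_\alpha \pi_\alpha(t)$ to the coefficient of $\mathbb{U}_\alpha$ and nothing to the other $\mathbb{U}_\beta$, the mixture has probability vector $\big(x_1\pi_1(t),\dots,x_{d+1}\pi_{d+1}(t)\big)$. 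Hence, by the uniqueness just noted, the decomposition \eqref{eq:decomD} holds for a given $\Lambda_t^{(\mathbf{p})}$ if and only if there are weights $x_\alpha\ge 0$ with $\sum_\alpha x_\alpha=1$ and admissible dephasing parameters $\pi_\alpha(t)$ (that is, $0\le\pi_\alpha(t)\le 1$ with $\pi_\alpha(0)=0$) such that $p_\alpha(t)=x_\alpha\pi_\alpha(t)$ for every $\alpha$ and every $t\ge 0$.

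From this reformulation both implications are short. For the necessity of \eqref{eq:lessthan1}, assume such a decomposition exists. Whenever $x_\alpha>0$ the bound $\pi_\alpha(t)\le 1$ gives $p_\alpha(t)=x_\alpha\pi_\alpha(t)\le x_\alpha$, and when $x_\alpha=0$ we get $p_\alpha(t)\equiv 0$; in either case $\sup_{t\ge 0}p_\alpha(t)\le x_\alpha$, and summing over $\alpha$ yields $\sum_\alpha \sup_{t}p_\alpha(t)\le\sum_\alpha x_\alpha=1$. For sufficiency, suppose $\sum_\alpha s_\alpha\le 1$ with $s_\alpha:=\sup_{t}p_\alpha(t)$. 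I would choose $x_\alpha\ge s_\alpha$ with $\sum_\alpha x_\alpha=1$ (possible precisely because $\sum_\alpha s_\alpha\le 1$, distributing the surplus $1-\sum_\alpha s_\alpha$ arbitrarily), and set $\pi_\alpha(t):=p_\alpha(t)/x_\alpha$ when $x_\alpha>0$ and $\pi_\alpha\equiv 0$ otherwise. Then $0\le\pi_\alpha(t)\le s_\alpha/x_\alpha\le 1$, $\pi_\alpha(0)=p_\alpha(0)/x_\alpha=0$ (since $p_0(0)=1$ forces $p_\alpha(0)=0$), and $\pi_\alpha$ inherits the regularity of $p_\alpha$; this realizes $\Lambda_t^{(\mathbf{p})}$ as an element of $\mathcal{D}_d$.

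For the strict containment, the inclusion $\mathcal{D}_d\subseteq\mathcal{P}_d$ is immediate because a convex combination of completely positive generalized Pauli channels is again a completely positive map of generalized Pauli form. To produce an element of $\mathcal{P}_d$ outside $\mathcal{D}_d$ I would exploit the gap between the pointwise constraint and the supremum in \eqref{eq:lessthan1}: trace preservation forces $\sum_\alpha p_\alpha(t)\le 1$ at each fixed $t$, but $\sum_\alpha\sup_t p_\alpha(t)$ may well exceed $1$ when the maxima occur at different times. Concretely, I would let each $p_\alpha(t)$ be a continuous bump supported on a time window $I_\alpha$, with the windows $I_1,\dots,I_{d+1}$ pairwise disjoint and away from $t=0$, and with peak value $\tfrac12$. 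At every time at most one coordinate is nonzero, so the channel is, instant by instant, a single dephasing channel with parameter at most $\tfrac12$, hence completely positive by \eqref{eq:FA}; yet $\sum_\alpha\sup_t p_\alpha(t)=\tfrac{d+1}{2}>1$, so \eqref{eq:lessthan1} fails and the channel lies in $\mathcal{P}_d\setminus\mathcal{D}_d$. The main obstacle is the first paragraph — pinning down that the decomposition \eqref{eq:decomD} forces $p_\alpha=x_\alpha\pi_\alpha$ pointwise in $t$, which is what converts the abstract convex-combination condition into the clean scalar inequality; the role of $\sup_t$ (rather than a pointwise sum) is the conceptual key that both drives this equivalence and makes time-dependence essential for the separating example.
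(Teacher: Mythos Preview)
Your proof is correct and follows essentially the same route as the paper: reduce the decomposition \eqref{eq:decomD} to the pointwise identities $p_\alpha(t)=x_\alpha\pi_\alpha(t)$, take suprema for necessity, and rescale for sufficiency. The only differences are cosmetic --- you justify the uniqueness of the probability vector via the invertibility of \eqref{eq:eiglambda} where the paper appeals to Corollary~\ref{cor:channelequal}, you absorb the surplus $1-\sum_\alpha s_\alpha$ into the weights $x_\alpha$ rather than adding an identity channel with weight $m_0$, and your disjoint-bump construction plays the role of the paper's explicit trigonometric example.
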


\begin{proof}

{\bf Necessity.}  Suppose that $\Lambda_t^{(\mathbf{p})}$ belongs to $\mathcal{D}_d$.  As mentioned before,  there exists   $0\leq \pi_\alpha(t)\leq 1$  with $\pi_\alpha(0)=0$ such that 	
$$ \Lambda_t^{(\mathbf{p})}=\sum_{\alpha=1}^{d+1} x_\alpha \Lambda_t^{(\mathbf{p}_{\alpha,\pi_\alpha})},$$
where $x_\alpha\geq 0$ and $\sum_{\alpha=1}^{d+1} x_\alpha=1.$   
So by the Corollary \ref{cor:channelequal} in Appendix \ref{sec:AppendixA}, one should have  
$$
p_\alpha(t)=x_\alpha \pi_\alpha(t), \text{ for } \alpha\in [d+1].$$
Therefore, taking the supremum of both side at each of the above   equations, one has
$$	 
\displaystyle\sup_{t\geq 0} p_\alpha(t)\leq  x_\alpha,  \text{ for } \alpha\in [d+1].$$  
Taking the sum of both side,  one has 
$$ \sum_{\alpha=1}^{d+1}\left(\sup_{t\geq 0}  p_\alpha(t) \right) \leq \sum_{\alpha=1}^{d+1} x_\alpha=1.$$
\noindent	{\bf Sufficiency.} Denote $m_\alpha:=\sup_{t\geq 0}  p_\alpha(t)$ for $\alpha\in[d+1].$
By assumption, we have $\sum_{\alpha=1}^{d+1} m_\alpha\leq 1.$  Therefore, we can suppose that $m_{0}\geq 0$ is the number satisfies $ m_0+ \sum_{\alpha=1}^{d+1} m_\alpha= 1.$ For each $\alpha\in [d+1]$, we define 
$$ \pi_\alpha(t)=\begin{cases}
	\displaystyle	\frac{p_\alpha(t)}{m_\alpha}, &  m_\alpha >0,\\
	0,& m_\alpha=0,
\end{cases}$$
and $\pi_0(t)=0.$ One has $d+2$ generalized Pauli dephasing channels 
$$ \Lambda_t^{(\mathbf{p}_{1,\pi_0})},  \text{ and } \Lambda_t^{(\mathbf{p}_{\alpha,\pi_\alpha})}, \forall \alpha \in [d+1].$$  
It is easy to check that 
$$ \Lambda_t^{(\mathbf{p})}=m_0 \Lambda_t^{(\mathbf{p}_{1,\pi_0})}+\sum_{\alpha=1}^{d+1} m_\alpha\Lambda_t^{(\mathbf{p}_{\alpha,\pi_\alpha})}.$$

For the last statement of the proposition, one notes that it is easy to construct some probability distribution functions whose corresponding equation \eqref{eq:lessthan1}  is violated. 

 \end{proof}

\begin{figure}[ht]
	\centering
	\includegraphics[scale=0.31]{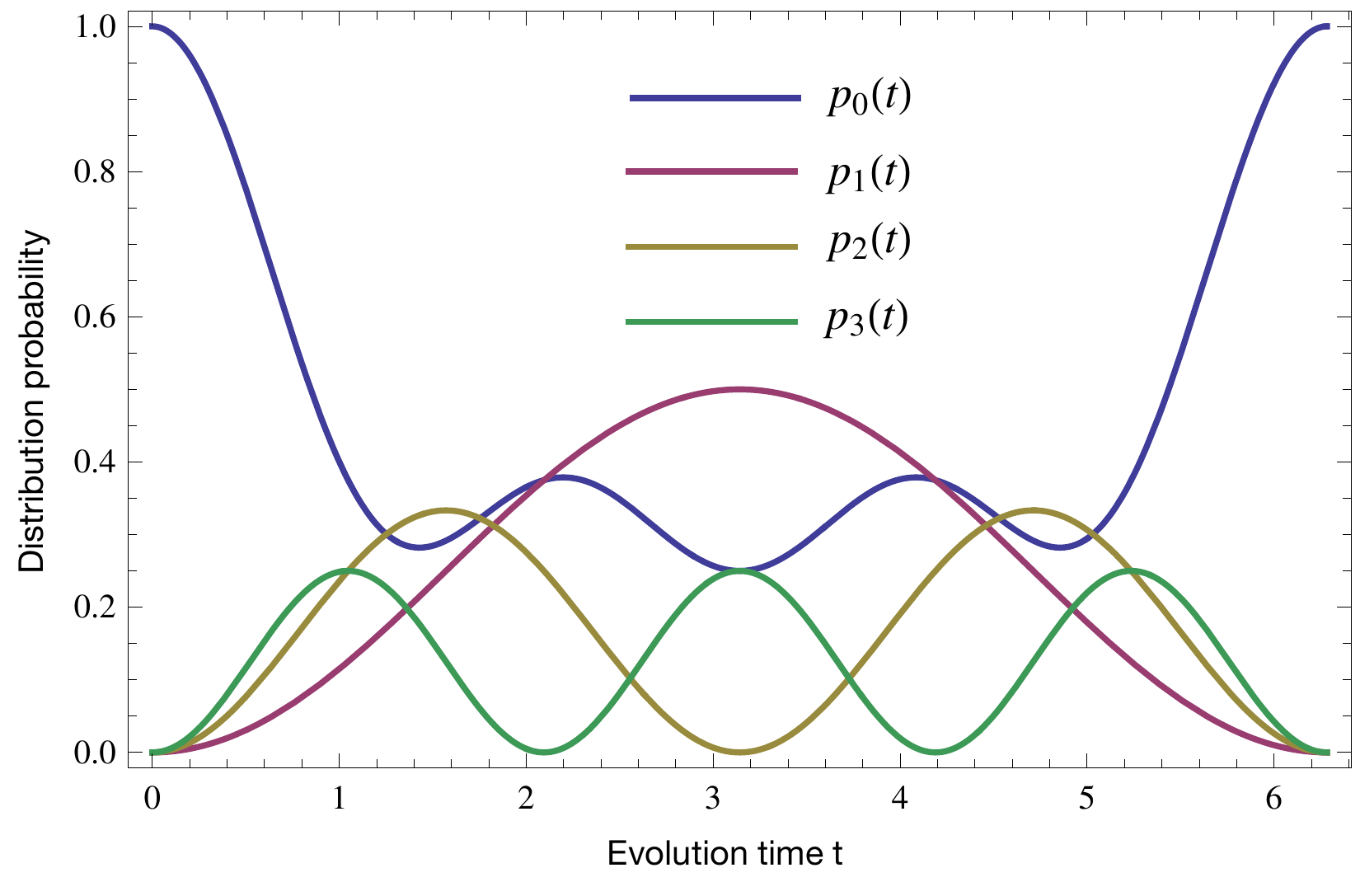}
	\caption{The figure of $p_\alpha(t)$ defined in Eq. \eqref{eq:PauliChannel} for $\alpha=0,1,2,3$ when $t\in[0,2\pi].$ }\label{fig:probability}
\end{figure} 

Here we take $d=2$ for an example  to illustrate that $\mathcal{D}_2$ is strictly containing in $\mathcal{P}_2.$ Suppose that    $p_1(t),p_2(t),p_3(t)$ are defined as 
\begin{equation}\label{eq:PauliChannel}
p_\alpha(t)=\frac{1}{2(\alpha+1)}(\cos [ \alpha t+\pi]+1), \alpha =1,2,3.
\end{equation}
And set $p_0(t)=1-\sum_{\alpha=1}^3 p_\alpha(t)$ and 
$$\mathbf{p}:=(p_0(t), p_1(t),p_2(t),p_3(t)),$$
see  Fig. \ref{fig:probability} for an intuition of the $p_\alpha(t).$ As $p_\alpha(t)\geq 0$ and $\sum_{\alpha=0}^3 p_\alpha(t)=1,$  the map $\Lambda_t^{(\mathbf{p})}$ forms a legitimate Pauli channel.  By the first statement of Proposition \ref{pro:genPauli}, $\Lambda_t^{(\mathbf{p})} \notin \mathcal{D}_2$ as the sum of supremums of $p_\alpha(t)$ is 
$$\frac{1}{2}+ \frac{1}{3}+\frac{1}{4}=\frac{13}{12}>1.$$

Several works \cite{Utagi21,Siudzinska21} have discussed the inability to generate noninvertible channels through convex combinations of $(d+1)$ invertible generalized Pauli dephasing channels. Nevertheless, it is crucial to recognize that this   does not preclude the generation of noninvertible channels through convex combinations of invertible generalized Pauli channels    which leads us to the following proposition.

\begin{proposition}\label{pro:genPauli_invertible} 
	Any mixture of  invertible generalized Pauli channels is still invertible. As a consequence, the invertible generalized Pauli  channels form a convex set. 
\end{proposition}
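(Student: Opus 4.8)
The plan is to reduce invertibility of a generalized Pauli channel to a spectral condition. First I would record that $\mathbb{I}_d$ together with $\{U_\alpha^k : \alpha\in[d+1],\ k\in[d-1]\}$ is a basis of $\mathbb{L}_d$ (there are $1+(d+1)(d-1)=d^2$ of these operators, and they are linearly independent thanks to the mutual unbiasedness of the bases $\mathcal{B}_\alpha$), and that this basis simultaneously diagonalises every element of $\mathcal{P}_d$ via $\Lambda_t^{(\mathbf{p})}[\mathbb{I}_d]=\mathbb{I}_d$ and $\Lambda_t^{(\mathbf{p})}[U_\alpha^k]=\lambda_\alpha(t)U_\alpha^k$. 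Hence $\det \Lambda_t^{(\mathbf{p})}=\prod_{\alpha=1}^{d+1}\lambda_\alpha(t)^{d-1}$, and at a fixed time the channel is invertible exactly when $\lambda_\alpha(t)\neq 0$ for every $\alpha$; invertibility of the dynamical map means this holds for all $t\ge 0$.

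The main obstacle, and the heart of the argument, is that a convex combination of nonzero real numbers can nonetheless vanish, so one cannot conclude invertibility of a mixture directly from nonvanishing of the constituent eigenvalues. I would overcome this by upgrading ``$\lambda_\alpha(t)\neq 0$ for all $t$'' to ``$\lambda_\alpha(t)>0$ for all $t$''. Two features of generalized Pauli channels make this possible: by \eqref{eq:eiglambda} each $\lambda_\alpha(t)$ is a real linear combination of the continuous functions $p_\beta(t)$, hence continuous on $[0,\infty)$; and the initial condition $p_0(0)=1$ forces $\lambda_\alpha(0)=1$. A continuous real function starting at $1$ that never attains $0$ must remain strictly positive by the intermediate value theorem, so invertibility yields $\lambda_\alpha(t)>0$ for all $\alpha$ and all $t$. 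This is precisely where the realness of the spectrum is used, and it is the point that fails in the Weyl setting, whose spectral data are complex and can wind around $0$ without vanishing.

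Given this positivity, the mixing step is routine. A convex combination $\Lambda_t^{(\mathbf{p})}=\sum_i x_i \Lambda_t^{(\mathbf{p}^{(i)})}$ of generalized Pauli channels built from the same MUBs is again such a channel, with probability vector $\mathbf{p}=\sum_i x_i\mathbf{p}^{(i)}$; since the diagonalising basis is shared, its eigenvalues are $\lambda_\alpha(t)=\sum_i x_i\lambda_\alpha^{(i)}(t)$. If every constituent is invertible then each $\lambda_\alpha^{(i)}(t)>0$ by the previous paragraph, and a convex combination of strictly positive reals is strictly positive, so $\lambda_\alpha(t)>0$ for all $\alpha$ and $t$ and the mixture is invertible. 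The concluding assertion is then immediate: this closure under arbitrary convex combinations is exactly the statement that the invertible generalized Pauli channels form a convex set.
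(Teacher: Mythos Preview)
Your proof is correct and follows essentially the same route as the paper: both reduce invertibility to nonvanishing of the shared eigenvalues $\lambda_\alpha(t)$, then use continuity, realness, and $\lambda_\alpha(0)=1$ (via the intermediate value theorem) to upgrade nonvanishing to strict positivity, after which closure under convex combination is immediate. Your exposition is in fact a little more explicit than the paper's, notably in spelling out the IVT step and why the argument breaks down for Weyl channels with complex spectra.
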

\begin{proof}
	Suppose that  $\{\Lambda_t^{(\mathbf{p}_k)}\}_{k=1}^K$ are all invertible generalized Pauli  channels and 
	$$ \Lambda_t^{(\mathbf{p})}=\sum_{k=1}^K x_k \Lambda_t^{(\mathbf{p}_k)},  \text{ where }  x_k\geq 0 \text{ and }  \sum_{k=1}^K x_k =1.$$ 
	
	Suppose that $\Lambda_t^{(\mathbf{p}_k)}[U_\alpha] =\lambda_\alpha^{(\mathbf{p}_k)}(t)U_\alpha$ and  $\Lambda_t^{(\mathbf{p})}[U_\alpha] =\lambda_\alpha^{(\mathbf{p})}(t)U_\alpha,$ then we have 
	\begin{equation}\label{eq:Lambda_com}
		\lambda_\alpha^{(\mathbf{p})}(t)=\sum_{k=1}^K x_k \lambda_\alpha^{(\mathbf{p}_k)}(t).
	\end{equation}
	By Eq. \eqref{eq:eiglambda}, the eigenvalues of  $\Lambda_t^{(\mathbf{p}_k)}$ are 
	$$\lambda_\alpha^{(\mathbf{p}_k)}(t)=1-\frac{d}{d-1} \left(\sum_{\beta=1}^{d+1} p_\beta^{(\mathbf{p}_k)}(t)- p_\alpha^{(\mathbf{p}_k)}(t)\right),$$
	which are real continuous functions on $[0,+\infty).$ Moreover, we have $\lambda_\alpha^{(\mathbf{p}_k)}(0)=1.$  As $\Lambda_t^{(\mathbf{p}_k)}$ is invertible,  therefore,   we must have  $ \lambda_\alpha^{(\mathbf{p}_k)}(t)>0$ for all $t\geq 0$ and $\alpha\in [d+1].$  With these relations  and Eq. \eqref{eq:Lambda_com}, one deduces that 
	$\lambda_\alpha^{(\mathbf{p})}(t)>0$ for all $t\geq 0$ and $\alpha\in[d+1].$ That is, $\Lambda_t^{(\mathbf{p})}$ is invertible. 
 
\end{proof}

However, this property does not hold for general  channels. In fact, there exists some invertible  channels whose mixture maybe non-invertible.
For each $(k,l)\in\mathbb{Z}_{d}\times\mathbb{Z}_{d}$, denote $U_{kl}$  be   the unitary Weyl operator,
\begin{equation}
	U_{kl}=\sum\limits_{m\in\mathbb{Z}_{d}}\omega_d^{km}|m\rangle\langle m+l|,~\omega_d=e^{\frac{2\pi \mathrm{i}}{d}}.
\end{equation} 
The time-dependent generalized Weyl channel is defined by
\begin{equation}\label{eq:Weyl}
	\mathcal{E}_t^{(\mathbf{p})}(X)=\sum\limits_{(i,j)\in\mathbb{Z}_{d}\times\mathbb{Z}_{d}}p_{ij}(t)U_{ij}XU_{ij}^{\dag},
\end{equation}
where $\sum\limits_{(i,j)\in\mathbb{Z}_{d}\times\mathbb{Z}_{d}}p_{ij}(t)=1$, $p_{ij}(t)$ is the time-dependent probability distribution such that \text{$p_{00}(0)=1$, $p_{ij}(0)=0$} for $(i,j)\neq(0,0)$, which guarantee that $\mathcal{E}_0^{(\mathbf{p})} =\mathbbm{1}.$

\begin{example}\label{exam:non=invertible}
Let $d=3$ and  \begin{equation}
	\begin{array}{l}
		\displaystyle p_1(t)=q_2(t)=\frac{1-e^{-t}}{2};\\
		\displaystyle p_2(t)=q_1(t)=\frac{1-e^{-t}}{3};\\
		\displaystyle p_0(t)=q_0(t)=1-p_1(t)-p_2(t).
	\end{array}
\end{equation}
Define $\mathbf{p}=(p_{ij}(t))_{(i,j)\in\mathbb{Z}_3\times \mathbb{Z}_3}$ and  $\mathbf{q}=(q_{ij}(t))_{(i,j)\in\mathbb{Z}_3\times \mathbb{Z}_3}$, where
$p_{ij}(t)=p_i(t)p_j(t)$ and $q_{ij}(t)=q_i(t)p_j(t).$ The dynamical maps $\mathcal{E}_t^{(\mathbf{p})}$ and  $\mathcal{E}_t^{(\mathbf{q})}$ 
 defined by Eq. \eqref{eq:Weyl}   are invertible, but their mixture $\frac{1}{2}\mathcal{E}_t^{(\mathbf{p})}+\frac{1}{2}\mathcal{E}_t^{(\mathbf{q})}$ is not.
\end{example}
The proof of Example \ref{exam:non=invertible} is given in Appendix \ref{sec:appB}. 

It is well-established that a generalized Pauli channel constitutes a Markovian semigroup if and only if its local decoherence rates remain nonnegative constants. In this context, we present a definitive and comprehensive condition for a generalized Pauli channel to qualify as a Markovian semigroup, which depends on the spectral properties of the channel.

\begin{theorem}\label{thm:GPauli}
	A   generalized Pauli channel $\Lambda_t^{(\mathbf{p})}$ is a Markovian semigroup if and only if its  spectra $\lambda_\alpha(t)$'s are ${1,e^{-c_1t},  \cdots ,e^{-c_{d+1}t}}$, where $c_\alpha$'s are nonnegative real constants and 
 satisfy $$\sum_{\alpha=1}^{d+1} c_\alpha \geq d\max_\beta\{c_\beta\}.$$
\end{theorem}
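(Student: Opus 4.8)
The plan is to exploit the fact that $\Lambda_t^{(\mathbf{p})}$, its generator $\mathcal{L}_t^{(\mathbf{p})}$, and all the maps $\mathbb{U}_\beta$ are simultaneously diagonal in the common operator eigenbasis $\{\mathbb{I}_d\}\cup\{U_\alpha^k : \alpha\in[d+1],\,k\in[d-1]\}$, so the whole statement collapses to a family of scalar relations indexed by $\alpha$. First I would record the action of the generator on the eigenoperators. Using the MUB structure, which gives $\mathbb{U}_\beta[U_\alpha^k]=(d-1)\delta_{\alpha\beta}U_\alpha^k-(1-\delta_{\alpha\beta})U_\alpha^k$, one finds $\mathcal{L}_t^{(\mathbf{p})}[U_\alpha^k]=-\big(\Gamma(t)-\gamma_\alpha(t)\big)U_\alpha^k$, where $\Gamma(t):=\sum_{\beta=1}^{d+1}\gamma_\beta(t)$. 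Feeding the eigenvalue equations $\Lambda_t^{(\mathbf{p})}[U_\alpha^k]=\lambda_\alpha(t)U_\alpha^k$ into $\dot{\Lambda}_t^{(\mathbf{p})}=\mathcal{L}_t^{(\mathbf{p})}\circ\Lambda_t^{(\mathbf{p})}$ then yields the scalar ODE $\dot{\lambda}_\alpha(t)=-\big(\Gamma(t)-\gamma_\alpha(t)\big)\lambda_\alpha(t)$ with initial condition $\lambda_\alpha(0)=1$. This ODE is the bridge between the spectral data and the decoherence rates, and everything else is bookkeeping on it.

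For the ``only if'' direction I would assume the channel is a Markovian semigroup, i.e.\ all $\gamma_\alpha$ are nonnegative constants. Then $c_\alpha:=\Gamma-\gamma_\alpha$ is a nonnegative constant (nonnegative because $\Gamma\geq\gamma_\alpha$), integrating the ODE with $\lambda_\alpha(0)=1$ gives $\lambda_\alpha(t)=e^{-c_\alpha t}$, and summing the defining relations gives $\sum_\alpha c_\alpha=(d+1)\Gamma-\Gamma=d\Gamma$, while $\max_\beta c_\beta=\Gamma-\min_\beta\gamma_\beta\leq\Gamma$; hence $d\max_\beta c_\beta\leq d\Gamma=\sum_\alpha c_\alpha$, which is the asserted inequality. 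For the ``if'' direction I would start from $\lambda_\alpha(t)=e^{-c_\alpha t}$ with $c_\alpha\geq 0$ satisfying the inequality, and invert the ODE: $\Gamma(t)-\gamma_\alpha(t)=-\dot{\lambda}_\alpha/\lambda_\alpha=c_\alpha$. Summing over $\alpha$ gives $d\,\Gamma(t)=\sum_\beta c_\beta$, so $\Gamma$ equals the constant $\frac{1}{d}\sum_\beta c_\beta$, and therefore $\gamma_\alpha(t)=\Gamma-c_\alpha=\frac{1}{d}\sum_\beta c_\beta-c_\alpha$ is a constant. Nonnegativity of each rate is exactly $\frac{1}{d}\sum_\beta c_\beta\geq c_\alpha$ for every $\alpha$, i.e.\ $\sum_\beta c_\beta\geq d\max_\alpha c_\alpha$, which holds by hypothesis; so all rates are nonnegative constants and the channel is a Markovian semigroup.

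The main obstacle is the first paragraph: pinning down $\mathcal{L}_t^{(\mathbf{p})}[U_\alpha^k]$ cleanly, which rests on the diagonal action of $\mathbb{U}_\beta$ coming from the MUB structure, and on justifying that the semigroup property is faithfully captured by the constant-rate ODE, so that I may legitimately convert between ``$\lambda_\alpha$ is an exponential'' and ``$\gamma_\alpha$ is a nonnegative constant.'' I would also need to check that the eigenvalues stay strictly positive so that $\dot{\lambda}_\alpha/\lambda_\alpha$ is well defined; this is automatic on the semigroup side since $e^{-c_\alpha t}>0$, and on the other side follows because $\lambda_\alpha(0)=1$ together with the multiplicative semigroup structure $\lambda_\alpha(s+t)=\lambda_\alpha(s)\lambda_\alpha(t)$ forbids a zero. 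Once this dictionary is in place, both implications reduce to short algebraic manipulations plus the elementary inequality relating $\sum_\alpha c_\alpha$, $d\Gamma$, and $\max_\beta c_\beta$.
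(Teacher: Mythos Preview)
Your argument is correct, and it follows a genuinely different route from the paper's.

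The paper argues both directions through the generalized Fujiwara--Algoet complete-positivity conditions on the eigenvalues. For necessity it sets $F_\beta(t)=1+de^{-c_\beta t}-\sum_\alpha e^{-c_\alpha t}$, notes $F_\beta(0)=0$, and uses $F_\beta'(0)\geq 0$ (forced by $F_\beta\geq 0$) to read off $\sum_\alpha c_\alpha\geq dc_\beta$. For sufficiency it shows $F_1'(t)\geq 0$ (assuming $c_1=\max_\beta c_\beta$), hence $F_1$ is monotone, hence all $F_\beta\geq 0$, so $p_\beta(t)=\frac{d-1}{d^2}F_\beta(t)\geq 0$ defines a legitimate channel whose spectrum is exponential and therefore a semigroup. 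You instead bypass the CP conditions entirely and work directly with the relation $\dot\lambda_\alpha/\lambda_\alpha=-(\Gamma-\gamma_\alpha)$, converting the biconditional into the linear algebra $c_\alpha=\Gamma-\gamma_\alpha$, $\sum_\alpha c_\alpha=d\Gamma$, so that the inequality $\sum c_\alpha\geq d\max c_\alpha$ is literally $\min_\alpha\gamma_\alpha\geq 0$.

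What each buys: your route is shorter and makes the inequality transparent as exactly the nonnegativity of the rates, but it leans on Lindblad's theorem in the sufficiency direction (nonnegative constant rates $\Rightarrow$ $e^{t\mathcal{L}}$ is CPTP), rather than verifying $p_\beta(t)\geq 0$ by hand. The paper's route is self-contained with respect to CP, at the cost of the extra monotonicity step for $F_1$. Both are complete; just be explicit that in your ``if'' direction the legitimacy of the channel is coming from the GKSL structure rather than from a direct check of the Fujiwara--Algoet bounds.
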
 

\begin{proof}
	{\bf Necessity.} If the generalized Pauli channel $\Lambda_t^{(\mathbf{p})}$ is a Markovian semigroup, it is clear that its spectra are of the  form
	$$\lambda_\alpha(t)=e^{-c_\alpha t},\ \  c_\alpha\geq 0,$$
	and Eq. \eqref{eq:FA} holds. Therefore,  if for each $\beta\in [d+1]$ we define
	$$ F_\beta(t):=1+d e^{-c_\beta t}-\sum_{\alpha=1}^{d+1} e^{-c_\alpha t},$$
	we should have $F_\beta(t)\geq 0,$ for all $t\geq 0$. As $F_\beta(0)=0,$  we must have  $F_\beta'(0)\geq 0.$ Otherwise, $F_\beta'(0)< 0$ implies there must exist some $t>0$ such that $F_\beta(t)<0.$ Therefore, we have 
	$ F_\beta'(0)=\sum_{\alpha=1}^{d=1} c_\alpha -d c_\beta\geq 0$
	for each $\beta$ which yields our conclusion.
	
	\noindent	{\bf Sufficiency.} Without loss of generality, we assume that $c_1\geq c_2\geq\cdots \geq  c_{d+1}.$ Therefore, $F_{d+1}(t)\geq F_d(t)\geq\cdots\geq  F_1(t).$ Moreover, by our given condition, we have $ \sum_ {\alpha=1}^{d+1} c_\alpha -d c_{1}\geq 0$. Therefore, 
	$$\begin{array}{l}
		F'_{1}(t)	= \displaystyle \sum_{\alpha=1}^{d+1} c_\alpha e^{-c_\alpha t} -dc_{1} e^{-c_{1} t}
		\geq  \displaystyle ( \sum_{\alpha=1}^{d+1} c_\alpha  -dc_1) e^{-c_1 t}\geq 0.\\ [3mm]
	\end{array}
	$$
	Therefore, $F_1(t)$ is an increasing monotone function.  Hence, $F_1(t)\geq F_1(0)=0$ for all $t\geq 0.$ From the above argument we found that $F_\beta(t)\geq 0$ for all $\beta.$ By defining $p_\beta(t)= \frac{d-1}{d^2}F_\beta(t), $   one finds that $\Lambda_t^{(\mathbf{p})}$ is a legitimate generalized Pauli channel with spectra being ${1,e^{-c_1t}, e^{-c_2t},\cdots, e^{-c_{d+1}t}}$.  As the spectra of $\Lambda_t^{(\mathbf{p})}$ are all of  the form $e^{ct}$ with $c$ being constant,  
	it  must be  a Markovian semigroup.
	
\end{proof}

With Theorem \ref{thm:GPauli} at hand, we can generalize the previous known statement: any nontrivial mixture of $(d+1)$  generalized Pauli dephasing channels that are Markovian semigroups is not   a   Markovian semigroup again.

\begin{proposition}\label{pro:genPauli2}
	Any nontrivial convex combination of elements in $\mathcal{S}_d$ must be lying outside of $\mathcal{S}_d.$
\end{proposition}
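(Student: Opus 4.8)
The plan is to reduce the statement to the spectral characterization of Theorem~\ref{thm:GPauli} and then exploit the linear independence of exponential functions. Write a nontrivial convex combination of elements of $\mathcal{S}_d$ as $\Lambda_t^{(\mathbf{p})}=\sum_{k=1}^K x_k\,\Lambda_t^{(\mathbf{p}_k)}$ with $x_k>0$, $\sum_k x_k=1$, where each $\Lambda_t^{(\mathbf{p}_k)}\in\mathcal{S}_d$. By Theorem~\ref{thm:GPauli} the spectrum of $\Lambda_t^{(\mathbf{p}_k)}$ consists of $1$ together with $e^{-c_\alpha^{(k)}t}$ for nonnegative constants $c_\alpha^{(k)}$, $\alpha\in[d+1]$. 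Equation~\eqref{eq:Lambda_com} then gives the $\alpha$-th eigenvalue of the mixture as $\lambda_\alpha(t)=\sum_{k=1}^K x_k\,e^{-c_\alpha^{(k)}t}$.

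Next I would use nontriviality. Since each generalized Pauli channel is determined by its spectrum (the identity together with the operators $U_\alpha^k$ forms a basis of $\mathbb{L}_d$), two distinct elements of $\mathcal{S}_d$ must differ in at least one of the constants $c_\alpha^{(k)}$. Because the combination is nontrivial, at least two summands, say indices $k$ and $k'$ with $x_k,x_{k'}>0$, are distinct channels; hence there is a fixed coordinate $\alpha_0\in[d+1]$ with $c_{\alpha_0}^{(k)}\neq c_{\alpha_0}^{(k')}$.

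The key step is to show that $\lambda_{\alpha_0}(t)$ fails to be a single exponential. Collecting the terms of $\lambda_{\alpha_0}(t)=\sum_k x_k e^{-c_{\alpha_0}^{(k)}t}$ according to distinct exponent values, at least two distinct exponents occur with strictly positive weight. Since exponential functions with distinct exponents are linearly independent over $\mathbb{R}$, the equation $\lambda_{\alpha_0}(t)=e^{-ct}$ for a constant $c$ would force a nontrivial linear combination of distinct exponentials to vanish identically, which is impossible. Thus $\lambda_{\alpha_0}(t)$ is genuinely a sum of two or more distinct exponentials and is not of the form $e^{-ct}$. By the necessity direction of Theorem~\ref{thm:GPauli}, a Markovian semigroup in $\mathcal{P}_d$ must have every spectral function a pure exponential; since $\lambda_{\alpha_0}$ is not, $\Lambda_t^{(\mathbf{p})}\notin\mathcal{S}_d$, which is the assertion.

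The main obstacle is precisely this collapse argument, i.e.\ verifying that a nontrivial convex combination of distinct exponentials can never be a single exponential. If one prefers to avoid invoking linear independence directly, an equivalent analytic route is available: the logarithmic derivative $g(t)=-\lambda_{\alpha_0}'(t)/\lambda_{\alpha_0}(t)$ is a weighted average of the exponents $c_{\alpha_0}^{(k)}$ with weights $x_k e^{-c_{\alpha_0}^{(k)}t}/\lambda_{\alpha_0}(t)$, and its derivative equals minus the corresponding weighted variance of the exponents, hence is strictly negative as soon as two exponents differ. Then $g$ is nonconstant, whereas for a pure exponential one has $g\equiv c$, giving the same contradiction. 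Either version settles the proposition, and the remaining bookkeeping is routine.
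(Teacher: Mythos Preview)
Your proof is correct and follows essentially the same route as the paper: both argue by contradiction, use the spectral characterization of $\mathcal{S}_d$ to write each eigenvalue of the mixture as $\lambda_\alpha(t)=\sum_k x_k e^{-c_\alpha^{(k)}t}$, and then show such a sum cannot equal a single exponential unless all the exponents coincide. The only cosmetic difference is in that last step: the paper argues directly by letting $t\to\infty$ in $1=\sum_k x_k e^{(c_\alpha-c_\alpha^{(k)})t}$ and then using $\sum_k x_k=1$ to force equality of all exponents, whereas you invoke linear independence of exponentials (or the logarithmic-derivative variance argument).
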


\begin{proof}

	We will prove the   statement by contradiction. Assume that   $\Lambda_t^{(\mathbf{p}_k)}$, $k=1,\cdots,K$ (where $K\geq 2$) are different Markovian semigroups  and there exists a Markovian semigroup $\Lambda_t^{(\mathbf{p})}$ and  $x_k>0$ with $\sum_{k=1}^K x_k=1$  such that 
	\begin{equation}\label{eq:decom}
		\Lambda_t^{(\mathbf{p})}=\sum_{k=1}^K x_k  \Lambda_t^{(\mathbf{p}_k)}.
	\end{equation} 
	Assume that 
	\begin{equation}\label{eq:eigens}
		\Lambda_t^{(\mathbf{p})}[U_\alpha]=e^{-c^{(\mathbf{p})}_\alpha t} U_\alpha,\ \Lambda_t^{(\mathbf{p}_k)}[U_\alpha]=e^{-c^{(\mathbf{p}_k)}_\alpha t} U_\alpha
	\end{equation}
	for $k=1,2,\cdots,K$ and $\alpha=1,2,\cdots,d+1.$
	By Eqs. \eqref{eq:decom} and  \eqref{eq:eigens}, we should  have 
	$$ e^{-c^{(\mathbf{p})}_\alpha t}=\sum_{k=1}^K x_k  e^{-c^{(\mathbf{p}_k)}_\alpha t},$$
	which implies that 
	\begin{equation}\label{eq:identity}
		1=\sum_{k=1}^K x_k  e^{(c_\alpha ^{(\mathbf{p})}-c^{(\mathbf{p}_k)}_\alpha ) t}.
	\end{equation}
	Clearly,  there is no $k\in [K]$  such that $(c_\alpha^{(\mathbf{p})}-c^{(\mathbf{p}_k)}_\alpha )>0$. Otherwise, taking $t$ tend to infinity, the right hand side of Eq. \eqref{eq:identity} must tend to infinity as $x_k>0$ for all $k\in[K]$. Therefore, we always have  $(c_\alpha^{(\mathbf{p})}-c^{(\mathbf{p}_k)}_\alpha )\leq 0$ which implies that $$e^{(c_\alpha^{(\mathbf{p})}-c^{(\mathbf{p}_k)}_\alpha ) t}\leq 1$$ for all $t\geq 0$. As $x_k>0$ and $\sum_{k=1}^K x_k=1$, Eq. \eqref{eq:identity}   holds only if 
	$$e^{(c_\alpha^{(\mathbf{p})}-c^{(\mathbf{p}_k)}_\alpha ) t} =1$$ 
	for all $t\geq 0$. Therefore, this forces $c_\alpha^{(\mathbf{p})}=c_\alpha^{(\mathbf{p}_k)}$ for all  $\alpha\in [d+1]$ and $k\in [K]$	 from which  one deduces that 
	$\Lambda_t^{(\mathbf{p}_k)}=\Lambda_t^{(\mathbf{p})}$ for all $k\in [K]$. Therefore, we obtain a contradiction as we have assumed that  $\Lambda_t^{(\mathbf{p}_k)}$ are different. 
	
\end{proof}

\begin{proposition}\label{pro:genPauli2_3}
We always have the inclusion $\mathcal{S}_2\subseteq \mathcal{D}_2$ but	$\mathcal{S}_d\not\subseteq \mathcal{D}_d$ for all $d\geq 3$.
\end{proposition}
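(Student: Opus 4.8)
The plan is to run everything through the membership criterion of Proposition \ref{pro:genPauli} together with the spectral characterization of Theorem \ref{thm:GPauli}. Fix a Markovian semigroup $\Lambda_t^{(\mathbf{p})}\in\mathcal{S}_d$, so that by Theorem \ref{thm:GPauli} its spectra are $\lambda_\alpha(t)=e^{-c_\alpha t}$ with $c_\alpha\geq 0$ and $\sum_{\alpha=1}^{d+1}c_\alpha\geq d\max_\beta c_\beta$. Inverting the eigenvalue relation \eqref{eq:eiglambda} recovers the probabilities as $p_\alpha(t)=\frac{d-1}{d^2}F_\alpha(t)$ with $F_\alpha(t)=1+d\,e^{-c_\alpha t}-\sum_{\beta=1}^{d+1}e^{-c_\beta t}$, exactly the functions appearing in the proof of Theorem \ref{thm:GPauli}. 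By Proposition \ref{pro:genPauli}, membership in $\mathcal{D}_d$ is then equivalent to the single inequality $\sum_{\alpha=1}^{d+1}\sup_{t\geq 0}p_\alpha(t)=\frac{d-1}{d^2}\sum_{\alpha=1}^{d+1}\sup_{t\geq 0}F_\alpha(t)\leq 1$, so the whole proposition becomes an estimate on $\sum_\alpha\sup_t F_\alpha(t)$.

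For the inclusion $\mathcal{S}_2\subseteq\mathcal{D}_2$ I would relabel the three rates so that $c_1\geq c_2\geq c_3\geq 0$. Two of the functions are then bounded by $1$: writing $F_1=1+(e^{-c_1 t}-e^{-c_2 t})-e^{-c_3 t}$ and $F_2=1+(e^{-c_2 t}-e^{-c_3 t})-e^{-c_1 t}$, monotonicity of $s\mapsto e^{-st}$ makes the bracketed differences nonpositive, so $\sup_t F_1\leq 1$ and $\sup_t F_2\leq 1$; the third obeys the crude bound $F_3\leq 1+e^{-c_3 t}\leq 2$. Hence $\sum_{\alpha=1}^{3}\sup_t F_\alpha\leq 4$, which gives $\sum_\alpha\sup_t p_\alpha\leq\frac14\cdot 4=1$, and Proposition \ref{pro:genPauli} places the channel in $\mathcal{D}_2$.

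For the failure $\mathcal{S}_d\not\subseteq\mathcal{D}_d$ with $d\geq 3$ I would exhibit a near-degenerate semigroup: take $d$ of the rates equal to some $a>0$ and the remaining rate equal to a small $\epsilon>0$. The semigroup condition $da+\epsilon\geq da$ holds automatically, so by Theorem \ref{thm:GPauli} this is a genuine element of $\mathcal{S}_d$. A short computation gives $F_\alpha=1-e^{-\epsilon t}$ with supremum $1$ for each of the $d$ fast indices, while the slow index yields $F_{\alpha_0}=1+(d-1)e^{-\epsilon t}-d\,e^{-at}$, whose interior maximum tends to $d$ as $\epsilon\to 0^+$ (the critical time $t^\ast\sim\frac1a\ln\frac1\epsilon$ satisfies $\epsilon t^\ast\to 0$, so $e^{-\epsilon t^\ast}\to 1$ while $e^{-at^\ast}\to 0$). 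Consequently $\sum_\alpha\sup_t p_\alpha(t)\to\frac{d-1}{d^2}(d+d)=2-\frac{2}{d}$, which exceeds $1$ precisely when $d\geq 3$; fixing $\epsilon$ small enough violates \eqref{eq:lessthan1}, so the channel lies outside $\mathcal{D}_d$. The same limit gives $2-2/d=1$ at $d=2$, matching the boundary nature of the first part.

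I expect the genuine difficulty to be conceptual rather than computational: one must control $\sum_\alpha\sup_t p_\alpha(t)$, not $\sup_t\sum_\alpha p_\alpha(t)$. The latter is always at most $\frac{d-1}{d^2}(d+1)=1-\frac{1}{d^2}<1$, so if every $F_\alpha$ were monotone the channel would automatically sit in $\mathcal{D}_d$; the entire phenomenon rests on the maxima of the different $F_\alpha$ being attained at distinct times. The delicate point of the $d=2$ argument is thus to check that at most one $F_\alpha$ can rise above $1$ and that it stays below $2$, while the crux of the $d\geq 3$ counterexample is to force the slow mode's interior maximum to occur after the fast modes have decayed, which is exactly what the limit $\epsilon\to 0^+$ engineers.
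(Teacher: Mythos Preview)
Your proposal is correct and follows essentially the same approach as the paper: both parts go through the membership criterion of Proposition~\ref{pro:genPauli} applied to the explicit probabilities $p_\alpha(t)=\frac{d-1}{d^2}F_\alpha(t)$ coming from Theorem~\ref{thm:GPauli}, with the $d=2$ case handled by the same ordering-and-bounding of the three $F_\alpha$ (the paper orders $c_1\leq c_2\leq c_3$, you order the reverse, but the estimates $\sup F_\alpha\leq 1,1,2$ are identical), and the $d\geq 3$ counterexample constructed by the same one-slow/$d$-equal-fast rate pattern. The only cosmetic differences are that the paper additionally writes out an explicit dephasing decomposition for $d=2$ (redundant once Proposition~\ref{pro:genPauli} is invoked) and, for $d\geq 3$, fixes a time $t>\log 2$ and sends $c\to 0$ rather than locating the interior critical point as you do.
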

\begin{proof} 
 
 First, we consider the qubit case, that is, $d=2$. By the proof of Theorem \ref{thm:GPauli}, we known that the probability distribution functions of any Markovian semigroup  $\Lambda_t^{(\mathbf{p})}$ are of the form
 $$ p_\beta(t)=\frac{1}{4}(1+2e^{-c_\beta t}-\sum_{\alpha=1}^3 e^{-c_\alpha t}).$$
 Without loss of generality, we assume that $0\leq c_1\leq c_2\leq c_3.$ Therefore,
 $$ \begin{array}{c}
 	0\leq 1+e^{-c_1 t}-e^{-c_2 t}-e^{-c_3 t}< 1+e^{-c_1 t}\leq 2,\\ [3mm]
 	0\leq 1+e^{-c_2 t}-e^{-c_3 t}-e^{-c_1 t}\leq 1-e^{-c_3 t}<1,\\ [3mm]
 	0\leq 1+e^{-c_3 t}-e^{-c_1 t}-e^{-c_2 t}\leq 1-e^{-c_1 t}<1.  	
 \end{array}
 $$
 Therefore, 
 $ \sup_{t\geq 0}  p_1(t)+ \sup_{t\geq 0}  p_2(t) +\sup_{t\geq 0}  p_3(t) \leq 1.$
 Moreover, if we define
 $$ \begin{array}{l}
 	\pi_1 (t):=(1+e^{-c_1 t}-e^{-c_3 t}-e^{-c_2 t})/2, \\[3mm]
 	\pi_2 (t):=1+e^{-c_2 t}-e^{-c_3 t}-e^{-c_1 t}, \\[3mm]
 	\pi_3 (t):=1+e^{-c_3 t}-e^{-c_1 t}-e^{-c_2 t},       	   
 \end{array}
 $$
 then one can check that
 $$\Lambda_t^{(\mathbf{p})}=\frac{1}{2} \Lambda_t^{(\mathbf{p}_{1,\pi_1})}+\frac{1}{4} \Lambda_t^{(\mathbf{p}_{2,\pi_2})}+\frac{1}{4} \Lambda_t^{(\mathbf{p}_{3,\pi_3})}.$$

 Now we show the  
 surprise  relation:   $\mathcal{S}_d\not \subseteq \mathcal{D}_d$ for $d\geq 3.$ 
 In fact, by Theorem \ref{thm:GPauli}, we can define a series of Markovian semigroups  of generalized Pauli channels $\Lambda_t^{(\mathbf{p}_c)}$, which arising from the setting $c_1=c$ and $c_2=c_3=\cdots=c_{d+1}=1,$ here we assume that $0<c<1$ and 
 $\mathbf{p}_c:=(p_0^{(c)}(t),p_1^{(c)}(t),\cdots,p_{d+1}^{(c)}(t))$, where
 $$p^{(c)}_\beta(t)=\frac{d-1}{d^2}(1+d e^{-c_\beta t}-\sum_{\alpha=1}^{d+1} e^{-c_\alpha t}).$$
 
 By taking $t$ tend to infinite, we obtain that 
 $$\sup_{t\geq 0} p^{(c)}_\beta(t) \geq \frac{d-1}{d^2}$$
 for each $\beta\in [d+1].$ Moreover, note that 
 $$p^{(c)}_1(t)=\frac{d-1}{d^2}\left(1+(d-1) e^{-c  t}-d e^{- t}\right).$$
 Fix any $t>\log 2$,  equivalently,  $1-e^{-t} >1/2$ and let $c$ tend to zero, we have 
 $$\lim_{c\rightarrow 0^+} p_1^{(c)}(t)=\frac{d-1}{d}(1-e^{-t})>\frac{d-1}{2d}\geq \frac{1}{d},$$
 where the last inequality holds only if $d\geq 3.$
 Therefore, there exists a small enough  $c>0$  such that 
 $\sup_{t\geq 0} p^{(c)}_1(t) > \frac{1}{d}.$
 For this $c$, we have 
 $$ \sum_{\alpha=1}^{d+1}\left(\sup_{t\geq 0}  p^{(c)}_\alpha(t) \right) >\frac{1}{d}+d\times\frac{d-1}{d^2}=1.$$
 By Proposition \ref{pro:genPauli}, the Markovian semigroup $\Lambda_t^{(\mathbf{p}_c)}$ do not belong to $\mathcal{D}_d.$
 \end{proof}

The condition $\sum_{\alpha\neq \beta} \gamma_\alpha(t)\geq 0$ for all $\alpha\in [d+1]$ is a sufficient and necessary condition for a dynamical map $\Lambda_t^{(\mathbf{p})}$ to be $P$-divisible when $d=2$. However, this does not hold for $d\geq 3$ \cite{Wud2015}  which provides a fundamental difference between the qubit and general qudit cases.
In the following, we will show that $\mathcal{S}_2\subseteq \mathcal{D}_2$ but $\mathcal{S}_d\not\subseteq \mathcal{D}_d$ for all $d\geq 3$, which provides another difference between the qubit and general qudit cases.
  Therefore, the relations among $\mathcal{P}_2, \mathcal{D}_2$ and $\mathcal{S}_2$ and relations among $\mathcal{P}_d, \mathcal{D}_d$ and $\mathcal{S}_d$ for  $d\geq 3$   can be viewed in Fig. \ref{fig:pauli}, respectively.
 
 \begin{figure}[ht]
 	\centering
 	\includegraphics[scale=0.9]{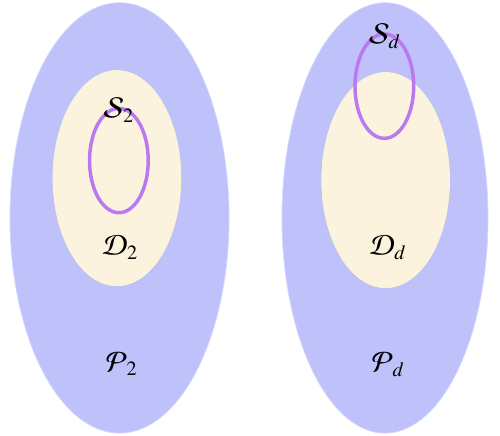}
 	\caption{The relations between the three sets $\mathcal{P}_2,\mathcal{D}_2$ and $\mathcal{S}_2$ and relations between the three sets $\mathcal{P}_d,\mathcal{D}_d$ and $\mathcal{S}_d$  for $d\geq 3$. }\label{fig:pauli}
 \end{figure}

It is interesting to note that by convex combination of Markovian semigroups in $\mathcal{S}_d$, the resultant map could has $(d-1)^2$ local decoherence rates that are permanently negative \cite{Siudzinska_JPA20}. However, we do not know whether $(d-1)^2$ is the largest number of permanently negative decoherence rates.
In the following, we prove that this is true for the setting of generalized Pauli channels. First, let $\{\gamma_\alpha(t)\}_{\alpha\in [d+1]}$ be the local decoherence rates of the resultant map of a convex combination of Markovian semigroups in $\mathcal{S}_d$.

\begin{proposition}\label{prop:mixSemiPauli}
	Given $d\geq2$ is an integer	 and suppose that  $\Lambda_t^{(\mathbf{p})}$ is a convex combination of  $K$ Markovian  semigroups $\{\Lambda_t^{(\mathbf{p}_k)}\}_{k=1}^K \subseteq \mathcal{S}_d,$ that is, 
	$$ \Lambda_t^{(\mathbf{p})}=\sum_{k=1}^K x_k \Lambda_t^{(\mathbf{p}_k)},  \text{ where }  x_k\geq 0 \text{ and }  \sum_{k=1}^K x_k =1.$$ 
	If the  time local generator of $\Lambda_t^{(\mathbf{p})}$ is 
	$$\mathcal{L}_t^{(\mathbf{p})}[\rho]=\frac{1}{d}\sum_{\alpha=1}^{d+1} \gamma_\alpha(t)\left(\mathbb{U}_\alpha [\rho]-(d-1) \rho\right),$$ 
	then  for all $\beta\in [d+1],$ we have
	$$
	\sum_{\alpha\neq \beta} \gamma_\alpha(t)\geq 0.$$
	In particular, for $d=2$, the mixture of Pauli channels must be $P$-divisible.
\end{proposition}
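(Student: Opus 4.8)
The plan is to reduce the sign condition on the decoherence rates to a monotonicity statement about the channel's spectrum, and then exploit the explicit exponential form of the spectra guaranteed by Theorem \ref{thm:GPauli}. First I would record how the time-local generator acts on the eigenoperators $U_\alpha^k$. A direct computation using the relation $\mathbb{U}_\beta[U_\alpha^k]=(d-1)U_\alpha^k$ when $\beta=\alpha$ and $\mathbb{U}_\beta[U_\alpha^k]=-U_\alpha^k$ when $\beta\neq\alpha$ — exactly the identity that reproduces the eigenvalue formula \eqref{eq:eiglambda} — gives
\begin{equation}
\mathcal{L}_t^{(\mathbf{p})}[U_\alpha^k]=-\Big(\sum_{\beta\neq\alpha}\gamma_\beta(t)\Big)U_\alpha^k.
\end{equation}
On the other hand, differentiating $\Lambda_t^{(\mathbf{p})}[U_\alpha^k]=\lambda_\alpha(t)U_\alpha^k$ and using $\dot\Lambda_t^{(\mathbf{p})}=\mathcal{L}_t^{(\mathbf{p})}\circ\Lambda_t^{(\mathbf{p})}$ shows that $U_\alpha^k$ is also an eigenoperator of $\mathcal{L}_t^{(\mathbf{p})}$ with eigenvalue $\dot\lambda_\alpha(t)/\lambda_\alpha(t)$. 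Comparing the two expressions yields the key identity
\begin{equation}
\sum_{\beta\neq\alpha}\gamma_\beta(t)=-\frac{\dot\lambda_\alpha(t)}{\lambda_\alpha(t)},
\end{equation}
valid wherever $\lambda_\alpha(t)\neq 0$.

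Second, I would pin down the spectrum of the mixture. By Eq. \eqref{eq:Lambda_com} the eigenvalues add under convex combination, and since each $\Lambda_t^{(\mathbf{p}_k)}\in\mathcal{S}_d$ is a Markovian semigroup, Theorem \ref{thm:GPauli} gives $\lambda_\alpha^{(\mathbf{p}_k)}(t)=e^{-c_\alpha^{(k)}t}$ with $c_\alpha^{(k)}\geq 0$. Hence
\begin{equation}
\lambda_\alpha(t)=\sum_{k=1}^K x_k\, e^{-c_\alpha^{(k)}t}.
\end{equation}
This is a nonnegative combination of strictly positive exponentials, so $\lambda_\alpha(t)>0$ for all $t\geq 0$, while $\dot\lambda_\alpha(t)=-\sum_k x_k c_\alpha^{(k)} e^{-c_\alpha^{(k)}t}\leq 0$. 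Substituting into the key identity gives
\begin{equation}
\sum_{\beta\neq\alpha}\gamma_\beta(t)=\frac{\sum_k x_k c_\alpha^{(k)} e^{-c_\alpha^{(k)}t}}{\sum_k x_k e^{-c_\alpha^{(k)}t}}\geq 0,
\end{equation}
which is the desired inequality for every $\alpha\in[d+1]$ (relabelling $\alpha$ as $\beta$). For $d=2$ the final assertion is then immediate: the condition $\sum_{\alpha\neq\beta}\gamma_\alpha(t)\geq 0$ for all $\beta$ is precisely the necessary-and-sufficient criterion for $P$-divisibility in the qubit case quoted just before the proposition, so the mixture is $P$-divisible.

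I expect the main obstacle to be the first step, namely justifying the action of $\mathbb{U}_\beta$ on the eigenoperators $U_\alpha^k$ and thereby the clean eigen-identity for $\mathcal{L}_t^{(\mathbf{p})}$; this can be checked by verifying that the same action reproduces \eqref{eq:eiglambda} for the full channel $\Lambda_t^{(\mathbf{p})}=p_0\,\mathbbm{1}+\tfrac{1}{d-1}\sum_\beta p_\beta \mathbb{U}_\beta$. Once the logarithmic-derivative identity $\sum_{\beta\neq\alpha}\gamma_\beta=-\dot\lambda_\alpha/\lambda_\alpha$ is in hand, the rest is forced by the convexity of the spectrum and the positivity of exponentials; the only point to watch is that $\lambda_\alpha(t)$ never vanishes, which is automatic here since a convex combination of strictly positive exponentials remains strictly positive.
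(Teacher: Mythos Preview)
Your proposal is correct and follows essentially the same route as the paper's proof: both establish the identity $\sum_{\beta\neq\alpha}\gamma_\beta(t)=-\dot\lambda_\alpha(t)/\lambda_\alpha(t)$ by comparing the action of $\mathcal{L}_t^{(\mathbf{p})}$ and of $\dot\Lambda_t^{(\mathbf{p})}(\Lambda_t^{(\mathbf{p})})^{-1}$ on $U_\alpha^k$, then use that the mixed eigenvalues $\lambda_\alpha(t)=\sum_k x_k e^{-c_\alpha^{(k)}t}$ are strictly positive with nonpositive derivative. The only cosmetic difference is that you spell out the intermediate computation $\mathbb{U}_\beta[U_\alpha^k]$ to justify the eigen-identity for the generator, whereas the paper simply asserts the resulting formula.
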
 
\begin{proof}
	The eigenvalues of $\Lambda_t^{(\mathbf{p}_k)}$ can be assumed to be $e^{-c_j^{(k)}t}$, where $(j,k)\in [d+1] \times [K]$ and $c_j^{(k)}\geq 0$.  So the eigenvalues of $\Lambda_t^{(\mathbf{p})}$ are 
	$$\lambda_j(t)=\sum_{k=1}^K x_ke^{-c_j^{(k)}t}>0.$$
	Therefore,   $\Lambda_t^{(\mathbf{p})}$ is invertible and 
	$$
	\dot{\Lambda}_t^{(\mathbf{p})} (\Lambda_t^{(\mathbf{p})})^{-1} [U_{\beta}^{\ell}]=\mathcal{L}_t^{(\mathbf{p})} [U_{\beta}^{\ell}].$$
	The left hand side is $\dot{\lambda}_\beta(t)/\lambda_\beta(t) U_{\beta}^{\ell}$, while the right hand side is $-\sum_{\alpha \neq \beta } \gamma_\alpha(t) U_{\beta}^{\ell}.$ Therefore,
	$$ \sum_{\alpha\neq \beta} \gamma_\alpha(t)= -\dot{\lambda}_\beta(t)/\lambda_\beta(t).$$
	By the definition of $\lambda_\beta(t)$, we have 
	$$\dot{\lambda}_\beta(t) =-\sum_{k=1}^K c_\beta^{(k)} x_k e^{-c_\beta^{(k)}t}\leq 0.$$
	Therefore, $\sum_{\alpha\neq \beta} \gamma_\alpha(t)\geq 0.$
\end{proof}

 From Proposition \ref{prop:mixSemiPauli}, one finds that there are no $d$ terms among $\{\gamma_\alpha(t)\}_{\alpha\in [d+1]}$ that are simultaneously negative at some times $t\geq  0$.   As a consequence, there are at most  $(d-1)^2$ local decoherence rates that are permanently negative. Moreover, we have the following statement.

\begin{proposition}\label{pro:genPauli_extenal}
	Each generalized Pauli channel in $\mathcal{P}_d$ could have  at most $(d-1)^2$ local decoherence rates that are permanently negative for all $t\geq0$.
\end{proposition}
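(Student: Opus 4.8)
The plan is to reduce the statement to a purely combinatorial bound on the \emph{collective} rates $\gamma_\alpha(t)$, $\alpha\in[d+1]$, and then to rule out the extremal case by a monotonicity argument. First I would put the generator into canonical Lindblad form: expanding $\mathbb{U}_\alpha[\rho]=\sum_{k=1}^{d-1}U_\alpha^k\rho (U_\alpha^k)^\dagger$ and using $(U_\alpha^k)^\dagger U_\alpha^k=\mathbb{I}_d$, one obtains
\begin{equation*}
\mathcal{L}_t^{(\mathbf{p})}[\rho]=\sum_{\alpha=1}^{d+1}\sum_{k=1}^{d-1}\frac{\gamma_\alpha(t)}{d}\left(U_\alpha^k\rho (U_\alpha^k)^\dagger-\tfrac{1}{2}\{(U_\alpha^k)^\dagger U_\alpha^k,\rho\}\right),
\end{equation*}
so the $d^2-1$ noise operators $U_\alpha^k$ all carry the rate $\gamma_\alpha(t)/d$, i.e.\ each collective rate $\gamma_\alpha$ occurs with multiplicity $d-1$. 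Hence it suffices to prove that \emph{at most $d-1$ of the $d+1$ rates $\gamma_\alpha(t)$ can be permanently negative}; multiplying by the multiplicity $d-1$ then yields the claimed count $(d-1)^2$.

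For the core estimate I would work on the (maximal) intervals on which $\Lambda_t^{(\mathbf{p})}$ is invertible, which is exactly where the $\gamma_\alpha(t)$ are finite. There the same computation as in the proof of Proposition \ref{prop:mixSemiPauli} --- applying $\dot\Lambda_t^{(\mathbf{p})}=\mathcal{L}_t^{(\mathbf{p})}\circ\Lambda_t^{(\mathbf{p})}$ to the eigenoperators $U_\beta^\ell$ --- gives the identity
\begin{equation*}
\sum_{\alpha\neq\beta}\gamma_\alpha(t)=-\frac{\dot\lambda_\beta(t)}{\lambda_\beta(t)}=-\frac{\der}{\der t}\log\lambda_\beta(t),\qquad \beta\in[d+1].
\end{equation*}
I stress that, unlike in Proposition \ref{prop:mixSemiPauli}, for a general channel in $\mathcal{P}_d$ the eigenvalues $\lambda_\beta(t)$ need not be monotone, so the pointwise inequality $\sum_{\alpha\neq\beta}\gamma_\alpha(t)\geq0$ is no longer available; this is precisely why the weaker notion of \emph{permanent} negativity is the right one here.

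The argument then proceeds by contradiction. Suppose $d$ of the rates were permanently negative, say $\gamma_1(t),\dots,\gamma_d(t)<0$ for all $t\geq0$. Taking $\beta=d+1$ in the identity above, its left-hand side is $\sum_{\alpha=1}^{d}\gamma_\alpha(t)$, which is strictly negative for every $t\geq0$; consequently $\frac{\der}{\der t}\log\lambda_{d+1}(t)>0$ for all $t\geq0$, so $\lambda_{d+1}(t)$ is strictly increasing. Since $\lambda_{d+1}(0)=1$, this forces $\lambda_{d+1}(t)>1$ for every $t>0$, contradicting the fact that every eigenvalue of the bistochastic (hence contractive) channel $\Lambda_t^{(\mathbf{p})}$ satisfies $\lambda_{d+1}(t)\leq1$. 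Therefore at most $d-1$ of the $\gamma_\alpha$ can be permanently negative, which finishes the proof.

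The step I expect to be the main obstacle is conceptual rather than computational: recognising that the pointwise bound of Proposition \ref{prop:mixSemiPauli} genuinely fails on all of $\mathcal{P}_d$ and must be replaced by the global monotonicity of $\log\lambda_{d+1}$ combined with $\lambda_{d+1}(t)\leq1$. The only technical points to check carefully are that $\lambda_\beta(t)\leq1$ (a standard property of eigenvalues of CPTP maps) and that the generator, and hence the $\gamma_\alpha(t)$, are well defined on the invertibility intervals on which permanent negativity is being asserted.
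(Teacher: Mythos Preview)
Your proposal is correct and follows essentially the same route as the paper: reduce to showing at most $d-1$ of the $d+1$ collective rates $\gamma_\alpha$ can be permanently negative (via the $(d-1)$-fold multiplicity), assume for contradiction that $d$ of them are, and use the relation $\dot\lambda_\beta(t)=-\bigl(\sum_{\alpha\neq\beta}\gamma_\alpha(t)\bigr)\lambda_\beta(t)$ for the remaining index $\beta$ to force $\lambda_\beta(t)>1$, contradicting $|\lambda_\beta(t)|\leq1$. The only cosmetic difference is that the paper writes the explicit exponential solution $\lambda_\beta(t)=\exp\!\bigl[-\int_0^t\gamma(\tau)\,\der\tau\bigr]$ of that ODE rather than arguing via monotonicity of $\log\lambda_\beta$, and it does not dwell on invertibility intervals; your extra care there is harmless and, if anything, slightly more rigorous.
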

\begin{proof}
	Let $\Lambda_t^{(\mathbf{p})} \in \mathcal{P}_d$ and suppose that the local decoherence rates are $\{\gamma_\alpha(t)\}_{\alpha=1}^{d+1}$ (each is $(d-1)$ folds). It is sufficient to prove that among   $\{\gamma_\alpha(t)\}_{\alpha=1}^{d+1}$, there are at most $(d-1)$ terms could be permanently negative for all $t\geq0$. Suppose not, without loss of generality, we could assume that $\{\gamma_\alpha(t)\}_{\alpha=2}^{d+1}$ are all permanently negative for all $t\geq0$.	  From the  equation
	$
	\dot{\Lambda}_t^{(\mathbf{p})}  = {\Lambda}_t^{(\mathbf{p})} \circ {\mathcal{L}}_t^{(\mathbf{p})},$
	one finds that 
	$$\dot{\lambda}_\alpha(t)=-\left(\sum_{\beta\neq \alpha } \gamma_\beta(t)\right ) \lambda_\alpha(t), \ \ \forall \alpha\in [d+1].$$
	Define $\gamma(t)=\sum_{\beta=2}^{d+1} \gamma_\beta(t).$ By the assumption,   we have $\gamma(t)<0$ for all $t\geq0$. Therefore, solving the equation $\dot{\lambda}_1(t)=-\gamma(t) \lambda_1(t),  \lambda_1(0)=1,$ one has 
	$$\lambda_1(t)=\mathrm{exp}\left[-\int_0^t \gamma(\tau)\mathrm{d} \tau \right]>1, \forall t\geq0$$
	which is contradicted with the well-known constraints $|\lambda_\alpha(t)|\leq 1.$
\end{proof}

 In Ref. \cite{Jagadish2022}, the authors demonstrated that non-invertibility is a prerequisite for generating a Markovian semigroup when mixing $(d+1)$ generalized dephasing channels. However, in the subsequent discussion, we reveal that this non-invertibility requirement can be eliminated by substituting the previous mixing channels with more general generalized Pauli channels from the set $\mathcal{P}_d$. In fact, for any integer $n\geq 2$, we can establish the existence of a Markovian semigroup that can be represented as a convex combination of $n$ distinct invertible generalized Pauli channels within $\mathcal{P}_d$.

We consider the simplest Markovian semigroup $\Lambda_t^{(\mathbf{p}) }$: $c_1=c_2=\cdots=c_{d+1}=1$ and $p_\alpha(t)=\frac{d-1}{d^2}(1-e^{-t})$ for $\alpha\in [d+1].$ First, we consider the case $n=2$.
For each $\alpha\in [d+1]$, we define 
$$
\begin{array}{l}
\displaystyle p_\alpha^{(1)}(t)=p_\alpha(t)(1-e^{-t})=\frac{d-1}{d^2}(1-e^{-t})^2, \\ [2mm]
\displaystyle p_\alpha^{(2)}(t)=p_\alpha(t)(1+e^{-t})=\frac{d-1}{d^2}(1-e^{-2t}). \\ [2mm]
\end{array}
$$ Therefore, we obtain two generalized Pauli channels $\Lambda_t^{(\mathbf{p}_1) }$ and $\Lambda_t^{(\mathbf{p}_2) }$, which satisfy $\Lambda_t^{(\mathbf{p}) }=\frac{1}{2} \Lambda_t^{(\mathbf{p}_1) }+\frac{1}{2} \Lambda_t^{(\mathbf{p}_2) }.$ 
Moreover, the eigenvalues of  $\Lambda_t^{(\mathbf{p}_1) }$ and $\Lambda_t^{(\mathbf{p}_2) }$ are 
$$
\begin{array}{l}
	\displaystyle	\lambda_\alpha^{(1)}(t)= 1-\frac{d^2}{d-1} p_\alpha^{(1)}(t)=1-(1-e^{-t})^2, \\ [2mm]
	\displaystyle	\lambda_\alpha^{(2)}(t)= 1-\frac{d^2}{d-1} p_\alpha^{(2)}(t)=1-(1-e^{-2t})=e^{-2t},  \\ [2mm]
\end{array}
$$ 
respectively,  which are both greater than 0 for all $t\geq 0$. Therefore, the Markovian semigroup $\Lambda_t^{(\mathbf{p})}$ is a convex combinations of two invertible channels. Note that $\Lambda_t^{(\mathbf{p}_2)}$ is again a Markovian semigroup. 

For the case of $n=3$,  we  first show that $\Lambda_t^{(\mathbf{p}_2) }$ can be decomposed into a convex combination of two invertible generalized Pauli channels.  For each $\alpha\in [d+1]$, we define 
$$
\begin{array}{l}
	\displaystyle q_\alpha^{(1)}(t)=p^{(2)}_\alpha(t)(1-e^{-2t})=\frac{d-1}{d^2}(1-e^{-2t})^2, \\ [2mm]
	\displaystyle q_\alpha^{(2)}(t)=p^{(2)}_\alpha(t)(1+e^{-2t})=\frac{d-1}{d^2}(1-e^{-4t}). \\ [2mm]
\end{array}
$$  
 Therefore, we obtain two generalized Pauli channels $\Lambda_t^{(\mathbf{q}_1) }$ and $\Lambda_t^{(\mathbf{q}_2) }$, which satisfy $\Lambda_t^{(\mathbf{p}_2) }=\frac{1}{2} \Lambda_t^{(\mathbf{q}_1) }+\frac{1}{2} \Lambda_t^{(\mathbf{q}_2) }.$ 
Moreover, the eigenvalues of  $\Lambda_t^{(\mathbf{q}_1 )}$ and $\Lambda_t^{(\mathbf{q}_2) }$ are 
$$
\begin{array}{l}
	\displaystyle	\mu_\alpha^{(1)}(t)= 1-\frac{d^2}{d-1} q_\alpha^{(1)}(t)=1-(1-e^{-2t})^2, \\ [2mm]
	\displaystyle	\mu_\alpha^{(2)}(t)= 1-\frac{d^2}{d-1} q_\alpha^{(2)}(t)=1-(1-e^{-4t})=e^{-4t},  \\ [2mm]
\end{array}
$$ 
respectively, which are both greater than 0 for all $t\geq 0$.  Therefore, the Markovian semigroup $\Lambda_t^{(\mathbf{p}_2)}$ is a convex combinations of two invertible channels $\Lambda_t^{(\mathbf{q}_1)}$ and $\Lambda_t^{(\mathbf{q}_2)}$. Note that $\Lambda_t^{(\mathbf{q}_2)}$ is again a Markovian semigroup. 
Therefore,  $\Lambda_t^{(\mathbf{p}) }$ can be decomposed as a convex combination of three diferent invertible generalized Pauli channels
$$\Lambda_t^{(\mathbf{p}) }=\frac{1}{2} \Lambda_t^{(\mathbf{p}_1) }+\frac{1}{2} \Lambda_t^{(\mathbf{p}_2) }=\frac{1}{2} \Lambda_t^{(\mathbf{p}_1) }+\frac{1}{4} \Lambda_t^{(\mathbf{q}_1) }+\frac{1}{4} \Lambda_t^{(\mathbf{q}_2) }.$$ 
We can follow the same procedure as described above and discover that $\Lambda_t^{(\mathbf{p})}$ can be expressed as a mixture of $n$ distinct invertible generalized Pauli channels. This finding diverges from the conventional understanding that non-invertibility is a prerequisite for generating a Markovian semigroup through the convex combination of generalized Pauli dephasing channels.

 \section{Conclusions}\label{four}



  In our investigation, we delved into the intricate properties arising from the convex combination of generalized Pauli channels. Our exploration led to several noteworthy findings.
First, we made a surprising discovery: the existence of certain generalized Pauli channels that could not be represented as a mere convex combination of $(d+1)$ generalized Pauli dephasing channels. This revelation challenged our intuition, which had been shaped by historical literature.   Subsequently, we endeavored to broaden our understanding beyond the confines of mixing just $(d+1)$ generalized Pauli dephasing channels. Instead, we examined the more general scenario of mixing any generalized Pauli channels.  Remarkably, many fundamental properties remained consistent in this expanded framework. For instances, any mixture of invertible generalized Pauli  channels is still invertible; any nontrivial convex combination of generalized Pauli  channels which are Markovian semigroups  could not lead to a  Markovian semigroup again. 

Moreover, we present a sufficient and necessary  condition for a generalized Pauli  channel to be a Markovian semigroup   via the spectra of the dynamical map.  This criterion allowed us to demonstrate that every Pauli channel (for $d=2$) could be expressed as a mixture of $(d+1)$ Pauli dephasing channels. However, this generality did not extend to higher dimensions, highlighting a crucial distinction between qubits and general qudits.

Additionally, we unveiled that each generalized Pauli channel of dimensionality $d$ could exhibit, at most, $(d-1)^2$ number of local decoherence rates that remained persistently negative for all time intervals $t\geq0$. This finding shed light on the constraints governing these channels when subjected to decoherence effects. We discovered that by combining certain invertible generalized Pauli channels, we can create  a Markovian semigroup. This outcome is different from what was previously reported in the research by Jagadish et al.  \cite{Jagadish2022}.

It is interesting to study the Markovian and non-Markovian properties of the resultant map under  mixing the generalized  Pauli  channels. These intriguing questions prompt further exploration in the field. Additionally, we wonder whether it is possible to construct a Weyl channel of dimensionality $d$ that exhibits more than $(d-1)^2$ local decoherence rates that remain permanently negative for all time intervals $t\geq0$.

\section*{Acknowledgements}

This work is supported by
National Natural Science Foundation of China
(12371458, 11901084),   the Key Research and Development Project of Guangdong province under Grant No. 2020B0303300001, the Guangdong Basic and Applied Research Foundation under Grant No.  2023A1515012074 and 2020B1515310016, Key Lab of Guangzhou for Quantum Precision Measurement under Grant No. 202201000010, the Science and
Technology Planning Project of Guangzhou under
Grants No. 2023A04J1296.

\appendix

\section{A fundamental lemma}\label{sec:AppendixA}

\begin{lemma}\label{lemm：zero}
	Given an integer $d\geq 2$ and $d^2$ linearly independent operators $V_{ij}\in  \mathbb{L}_d$ where $(i,j)\in [d] \times [d]$. For any $\mathbf{x}:=(x_{ij})_{(i,j)\in [d]\times [d]}$ where $x_{ij}\in \mathbb{C}$, we can define a map from $\mathbb{L}_d$ to $\mathbb{L}_d$ given by  $$\Lambda^{(\mathbf{x})}[\sigma]:= \sum_{i=1}^d\sum_{j=1}^d x_{ij} V_{ij} \sigma V_{ij}^\dagger,\ \  \forall \sigma \in \mathbb{L}_d. $$
	If $\Lambda^{(\mathbf{x})}$ 
	 acts trivially on all density matrices $\mathbb{D}_d$,  that is, $\Lambda^{(\mathbf{x})}[\rho]=\mathbf{0}$ for all $\rho\in \mathbb{D}_d$, then   $\mathbf{x}=\mathbf{0}.$
	\end{lemma}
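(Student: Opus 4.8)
The plan is to first strengthen the hypothesis, which constrains $\Lambda^{(\mathbf{x})}$ only on the density matrices $\mathbb{D}_d$, into a statement valid on all of $\mathbb{L}_d$, and then to linearize the problem (via the Choi matrix / vectorization) so that the assumed linear independence of the $V_{ij}$ can be exploited directly.

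First I would upgrade the hypothesis from $\mathbb{D}_d$ to all of $\mathbb{L}_d$. The rank-one projectors $|\psi\rangle\langle\psi|$ lie in $\mathbb{D}_d$, and choosing $\psi\in\{|m\rangle,\,(|m\rangle+|n\rangle)/\sqrt2,\,(|m\rangle+i|n\rangle)/\sqrt2\}$ one recovers every matrix unit $E_{mn}$ inside their complex span; hence the complex span of $\mathbb{D}_d$ is all of $\mathbb{L}_d$. Since $\sigma\mapsto\Lambda^{(\mathbf{x})}[\sigma]$ is complex-linear, the assumption $\Lambda^{(\mathbf{x})}[\rho]=\mathbf 0$ for every $\rho\in\mathbb{D}_d$ forces $\Lambda^{(\mathbf{x})}[\sigma]=\mathbf 0$ for every $\sigma\in\mathbb{L}_d$, i.e.\ the superoperator itself vanishes.

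Next I would pass to the Choi matrix. The Choi matrix of the elementary map $\sigma\mapsto V_{ij}\sigma V_{ij}^\dagger$ is the rank-one operator $\mathrm{vec}(V_{ij})\,\mathrm{vec}(V_{ij})^\dagger$, where $\mathrm{vec}\colon\mathbb{L}_d\to\mathbb{C}^{d^2}$ is the vectorization isomorphism (this rests on the identity $\mathrm{vec}(V\sigma V^\dagger)=(V\otimes\overline{V})\,\mathrm{vec}(\sigma)$). By linearity of the Choi correspondence $\Lambda\mapsto C_\Lambda$, the vanishing of $\Lambda^{(\mathbf{x})}$ is equivalent to the vanishing of $C=\sum_{ij}x_{ij}\,w_{ij}w_{ij}^\dagger$ with $w_{ij}:=\mathrm{vec}(V_{ij})$. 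Because $\mathrm{vec}$ is a linear isomorphism, the hypothesis that $\{V_{ij}\}$ are linearly independent transfers to linear independence of the vectors $\{w_{ij}\}$.

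Finally I would close with a biorthogonality argument: a sum $\sum_{ij}x_{ij}w_{ij}w_{ij}^\dagger$ of outer products of linearly independent vectors vanishes only if all $x_{ij}=0$. Completing $\{w_{ij}\}$ to a basis of $\mathbb{C}^{d^2}$ and taking the dual family $\{u_{kl}\}$ with $u_{kl}^\dagger w_{ij}=\delta_{(k,l),(i,j)}$, I would sandwich the relation $C=\mathbf 0$ between $u_{kl}^\dagger$ and $u_{kl}$ to isolate $x_{kl}=0$ for each $(k,l)$, giving $\mathbf{x}=\mathbf 0$. I expect the only genuinely delicate point to be the first step—confirming that evaluation on density matrices alone already determines the superoperator on all of $\mathbb{L}_d$—together with keeping the conjugate/transpose bookkeeping in the vectorization straight; the concluding rank-one independence is routine linear algebra.
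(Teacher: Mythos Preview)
Your proposal is correct and follows essentially the same strategy as the paper: first extend the hypothesis from $\mathbb{D}_d$ to all of $\mathbb{L}_d$ via $\mathrm{span}_{\mathbb{C}}(\mathbb{D}_d)=\mathbb{L}_d$, then linearize using the operator--vector correspondence. The only cosmetic difference is that the paper works with the natural representation $\sum_{ij}x_{ij}\,V_{ij}\otimes V_{ij}^{*}$ and invokes linear independence of the full tensor family $\{V_{ij}\otimes V_{kl}^{*}\}$, whereas you phrase the same step through the Choi matrix $\sum_{ij}x_{ij}\,w_{ij}w_{ij}^{\dagger}$ and finish with a dual-basis sandwich; both are routine linear algebra once the vectorization is in place.
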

\begin{proof}
By the linearity of $\Lambda^{(\mathbf{x})}$ and well-known result 
	$ \mathrm{span}_\mathbb{C}(\mathbb{D}_d)=\mathbb{L}_d,$  the operation $\Lambda^{(\mathbf{x})}$ also acts trivially on all matrices in  $\mathbb{L}_d,$ that is,  $\Lambda^{(\mathbf{x})}$ is the zeros operation from  $\mathbb{L}_d$ to $\mathbb{L}_d.$
	Denote $\{|e_i\rangle\}_{i\in [d]}$ the computational basis of $\mathcal{H}_d$. Then we have $$\mathbf{0}=\Lambda^{(\mathbf{x})}(|e_k\rangle \langle e_l|)=\sum_{(i,j)\in [d] \times [d]} x_{ij} V_{ij}  |e_k\rangle \langle e_l| V_{ij}^\dagger,$$
	which is equivalent to 
	$$\sum_{(i,j)\in [d] \times [d]} x_{ij} V_{ij}\otimes  V_{ij}^*  |e_k\rangle\otimes |e_l\rangle=\mathbf{0}$$
	for all $(k,l)\in [d]\times [d]$. Hence $\sum_{(i,j)\in [d] \times [d]} x_{ij} V_{ij}\otimes  V_{ij}^*$ is a zero map from $\mathcal{H}_d\otimes \mathcal{H}_d$ to itself. By the tensor theory and the linear independence of $\{V_{ij}\}$, we know that 
	the set $\{V_{ij}\otimes V^*_{kl}\}_{i,j,k,l=1}^d$ form a basis of $\mathbb{L}_d\otimes \mathbb{L}_d.$ Therefore, the set $\{V_{ij}\otimes V^*_{kl}\}_{i,j,k,l=1}^d $ is linearly independent which implies that $x_{ij}=0$ for all $(i,j)\in [d] \times [d].$
	\end{proof}

\begin{corollary}\label{cor:channelequal}
	Given an integer $d\geq 2$ and two probability distribution functions $\mathbf{p}=(p_0(t),p_1(t),\cdots,p_{d+1}(t))$ and  $\mathbf{q}=(q_0(t),q_1(t),\cdots,q_{d+1}(t))$. The two generalized Pauli channels $\Lambda_t^{(\mathbf{p}) }$ and  $\Lambda_t^{(\mathbf{q}) }$ are equal if and only if $\mathbf{p}=\mathbf{q}.$
	
	\end{corollary}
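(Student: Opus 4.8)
The plan is to read the corollary off of the fundamental Lemma~\ref{lemm：zero} by recognizing the difference of the two channels as a map of exactly the form treated there. The sufficiency direction is immediate: if $\mathbf{p}=\mathbf{q}$, then $\Lambda_t^{(\mathbf{p})}$ and $\Lambda_t^{(\mathbf{q})}$ are given by literally the same formula~\eqref{GPC2} and hence coincide. So all the content lies in the necessity direction.

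For necessity, I would assume $\Lambda_t^{(\mathbf{p})}=\Lambda_t^{(\mathbf{q})}$ and consider the difference map $\Delta_t:=\Lambda_t^{(\mathbf{p})}-\Lambda_t^{(\mathbf{q})}$, which by hypothesis sends every density matrix to $\mathbf{0}$. Expanding each $\mathbb{U}_\alpha$ into its defining summands and writing $\rho=\mathbb{I}_d\,\rho\,\mathbb{I}_d^\dagger$, the map $\Delta_t$ takes the canonical form
$$\Delta_t[\rho]=\big(p_0(t)-q_0(t)\big)\,\mathbb{I}_d\,\rho\,\mathbb{I}_d^\dagger+\frac{1}{d-1}\sum_{\alpha=1}^{d+1}\sum_{k=1}^{d-1}\big(p_\alpha(t)-q_\alpha(t)\big)\,U_\alpha^k\,\rho\,(U_\alpha^k)^\dagger,$$
which is precisely a map $\Lambda^{(\mathbf{x})}$ attached to the collection of operators $\{\mathbb{I}_d\}\cup\{U_\alpha^k : \alpha\in[d+1],\,k\in[d-1]\}$, carrying the coefficient $p_0(t)-q_0(t)$ on $\mathbb{I}_d$ and $\tfrac{1}{d-1}(p_\alpha(t)-q_\alpha(t))$ on each $U_\alpha^k$.

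The key step, and the only place where the geometry of the MUB construction genuinely enters, is to verify that these $1+(d+1)(d-1)=d^2$ operators are linearly independent, which is exactly the hypothesis needed to invoke Lemma~\ref{lemm：zero}. I would establish this via the Hilbert--Schmidt inner product. Since $U_\alpha=\sum_{l}\omega_d^{l}|\phi_l^{(\alpha)}\rangle\langle\phi_l^{(\alpha)}|$, one has $U_\alpha^m=\sum_l\omega_d^{lm}|\phi_l^{(\alpha)}\rangle\langle\phi_l^{(\alpha)}|$ and $\Tr(U_\alpha^m)=\sum_l\omega_d^{lm}$, which vanishes unless $m\equiv0\pmod d$. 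For fixed $\alpha$ this gives $\Tr\big((U_\alpha^k)^\dagger U_\alpha^l\big)=\Tr(U_\alpha^{\,l-k})=d\,\delta_{kl}$ for $k,l\in[d-1]$, since $|l-k|\le d-2<d$. For $\alpha\neq\beta$, mutual unbiasedness $|\langle\phi_m^{(\alpha)}|\phi_n^{(\beta)}\rangle|^2=\tfrac1d$ yields $\Tr\big((U_\alpha^k)^\dagger U_\beta^l\big)=\tfrac1d\big(\sum_m\omega_d^{-mk}\big)\big(\sum_n\omega_d^{nl}\big)=0$ for $k,l\in[d-1]$, and finally $\Tr(U_\alpha^k)=0$ shows each $U_\alpha^k$ is orthogonal to $\mathbb{I}_d$. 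Thus the $d^2$ operators are pairwise Hilbert--Schmidt orthogonal and therefore linearly independent.

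With linear independence secured, Lemma~\ref{lemm：zero} forces every coefficient of $\Delta_t$ to vanish, that is $p_0(t)-q_0(t)=0$ and $\tfrac{1}{d-1}(p_\alpha(t)-q_\alpha(t))=0$ for all $\alpha\in[d+1]$, which is exactly $\mathbf{p}=\mathbf{q}$ and completes the proof. I expect the sole genuine obstacle to be the verification of linear independence; once the orthogonality computation above is in place, the conclusion is a direct application of the fundamental lemma.
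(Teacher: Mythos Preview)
Your proof is correct and follows essentially the same route as the paper: apply Lemma~\ref{lemm：zero} to the difference map, using that the $d^2$ operators $\{\mathbb{I}_d\}\cup\{U_\alpha^k\}_{(\alpha,k)\in[d+1]\times[d-1]}$ are linearly independent. The paper simply asserts this linear independence (citing the MUB origin of the $U_\alpha$), whereas you supply the explicit Hilbert--Schmidt orthogonality computation; otherwise the arguments are identical.
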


This could be deduced easily from  Lemma \ref{lemm：zero} once  one notes that the $d^2$ unitary matrices $\{\mathbb{I}_d\}\cup\{U_{\alpha}^k\}_{(\alpha,k)\in [d+1]\times [d-1]}$  (where $U_{\alpha}$ is defined by Eq. \eqref{eq:UU}) arising from the ($d+1$) MUBs are linearly independent.

 \section{Proof of Example \ref{exam:non=invertible}}\label{sec:appB}
 Note that  for any general probabilities distribution $\mathbf{p},$ the eigenvalue equations of $\mathcal{E}_t^{(\mathbf{p})} $ defined by Eq. \eqref{eq:Weyl}  always  satisfy  $  	\mathcal{E}_t^{(\mathbf{p})}(U_{kl})=\lambda_{kl}(t)U_{kl}, \forall (k,l)\in\mathbb{Z}_{d}\times\mathbb{Z}_{d}, $
where the time-dependent eigenvalues $\lambda_{kl}(t)$ could be expressed as 
\begin{equation}\label{eq:eigexpression}
	\lambda_{kl}(t)=\sum\limits_{(i,j)\in\mathbb{Z}_{d}\times\mathbb{Z}_{d}}H_{ij,kl}\; p_{ij}(t),
\end{equation}
with $H$ being the $d^{2}\times d^{2}$ Hermitian complex matrix defined by $H_{ij,kl}=\omega_d^{jk-il}$.

\vskip 5pt

\noindent {\bf Proof of Example \ref{exam:non=invertible}.} 
 Set $\omega_3=e^{\frac{2\pi \mathrm{i}}{3}}$. Denote $\mathbf{r}=\frac{1}{2} \mathbf{p}+\frac{1}{2} \mathbf{q}$, that is, the probabilities distribution of the channel $\frac{1}{2} \mathcal{E}_t^{(\mathbf{p})}+\frac{1}{2} \mathcal{E}_t^{(\mathbf{q})}.$ Suppose their eigenvalue equations are 
 $
	\mathcal{E}_t^{(\mathbf{x})}(U_{kl})=\lambda^{(\mathbf{x})}_{kl}(t)U_{kl}, \forall (k,l)\in\mathbb{Z}_{3}\times\mathbb{Z}_{3}, $
for $\mathbf{x}\in \{\mathbf{p},\mathbf{q},\mathbf{r}\}.$  By Eq. \eqref{eq:eigexpression}, we have 
$$
\begin{array}{rl}
   \lambda^{(\mathbf{p})}_{kl}  &=\left( \displaystyle\sum_{i=0}^2  \omega^{-il} p_{i}(t)\right) \left( \displaystyle\sum_{j=0}^2  \omega^{jk} p_{j}(t)\right),   \\[5mm]
    \lambda^{(\mathbf{q})}_{kl}  &=\left( \displaystyle\sum_{i=0}^2  \omega^{-il} q_{i}(t)\right) \left( \displaystyle\sum_{j=0}^2  \omega^{jk} p_{j}(t)\right),\\[5mm]
    \lambda^{(\mathbf{r})}_{kl}  &=\left( \displaystyle\sum_{i=0}^2  \omega^{-il} \frac{p_i(t)+q_{i}(t)}{2}\right) \left( \displaystyle\sum_{j=0}^2  \omega^{jk} p_{j}(t)\right).
\end{array}
$$
Note that  $\displaystyle\sum_{i=0}^2  \omega^{-il} p_{i}(t)=1$ for $l=0$ and 
$$
\begin{array}{c}\displaystyle\sum_{i=0}^2  \omega^{-i} p_{i}(t) =p_0(t)-\frac{p_1(t)+p_2(t)}{2} +\mathrm{i} \frac{\sqrt{3}}{2}(p_2(t)-p_1(t)), \\ 
\displaystyle\sum_{i=0}^2  \omega^{-2i} p_{i}(t) =p_0(t)-\frac{p_1(t)+p_2(t)}{2} +\mathrm{i} \frac{\sqrt{3}}{2}(p_1(t)-p_2(t)), 
\end{array}$$
which are all nonzero  as $p_1(t)\neq p_2(t)$ for all $t> 0$. Similarily,
$\displaystyle\sum_{i=0}^2  \omega^{-il} q_{i}(t)\neq 0$ for all $t>0$.  Note that $\sum_{j=0}^2  \omega^{jk} p_{j}(t)$ is just the complex conjugate of $\sum_{j=0}^2  \omega^{-jk} p_{j}(t).$ Therefore, both $\lambda^{(\mathbf{p})}_{kl} $ and  $\lambda^{(\mathbf{q})}_{kl} $  are nonzero for all $(k,l) \in \mathbb{Z}_3\times \mathbb{Z}_3.$ That is, both $\mathcal{E}_t^{(\mathbf{p})}$ and $\mathcal{E}_t^{(\mathbf{q})}$  are invertible. Moreover, one finds that 
$$ \displaystyle\sum_{i=0}^2  \omega^{-i} \;\frac{p_i(t)+q_{i}(t)}{2}=p_0(t)-\frac{p_1(t)+p_2(t)}{2}=\frac{5 e^{-t}-1}{4},$$ 
which is equal to zero when $t= \log 5.$ Therefore,  the channel $\frac{1}{2} \mathcal{E}_t^{(\mathbf{p})}+\frac{1}{2} \mathcal{E}_t^{(\mathbf{q})}$ is non-invertible. \qed


\begin{thebibliography}{10}
\bibitem{Bre2002}   H.-P. Breuer and F. Petruccione, The theory of open quantum systems (Oxford University Press, Oxford, 2002).

\bibitem{Wei2000}   U. Weiss, Quantum dissipative systems (World Scientific, Singapore, 2000).

\bibitem{Riv2011}   \'{A}. Rivas, S.F. Huelga, Open Quantum Systems, An Introduction, Springer Briefs in Physics (Springer, Berlin, 2011).

\bibitem{Riv2014}   \'{A}. Rivas, S.F. Huelga, and M.B. Plenio, Quantum non-Markovianity: characterization, quantification and detection,
 \href{https://iopscience.iop.org/article/10.1088/0034-4885/77/9/094001} {Rep. Prog. Phys. \textbf{77}, 094001 (2014).}
 
\bibitem{Bre2016}   H.-P. Breuer, E.-M. Laine, J. Piilo, and B. Vacchini, Colloquium: Non-Markovian dynamics in open quantum systems, \href{https://journals.aps.org/rmp/abstract/10.1103/RevModPhys.88.021002}{Rev. Mod. Phys. \textbf{88}, 021002 (2016).}

\bibitem{Veg2017}   I. de Vega, and D. Alonso, Dynamics of non-Markovian open quantum systems,
 \href{https://journals.aps.org/rmp/abstract/10.1103/RevModPhys.89.015001} {Rev. Mod. Phys. \textbf{89}, 015001 (2017).}

\bibitem{Gor1976}   V. Gorini, A. Kossakowski, E.C.G. Sudarshan, Completely positive dynamical semigroups of $N$-level systems,
 \href{https://aip.scitation.org/doi/abs/10.1063/1.522979} {J. Math. Phys. \textbf{17}, 821 (1976).}

\bibitem{Lin1976}    G. Lindblad, On the generators of quantum dynamical semigroups,
\href{https://link.springer.com/article/10.1007/BF01608499} {Commun. Math. Phys. \textbf{48}, 119 (1976).}

\bibitem{Riv2010}   \'{A}. Rivas, S.F. Huelga, M.B. Plenio, Entanglement and Non-Markovianity of Quantum Evolutions,
\href{https://journals.aps.org/prl/abstract/10.1103/PhysRevLett.105.050403} {Phys. Rev. Lett. \textbf{105}, 050403 (2010).}

\bibitem{Cre2007}   E. Andersson, J.D. Cresser, M.J.W. Hall, Finding the Kraus decomposition from a master equation and vice versa,
\href{https://www.tandfonline.com/doi/full/10.1080/09500340701352581} {J. Mod. Opt. \textbf{54}, 1695 (2007).}

\bibitem{Chr2022}D. Chru\'{s}ci\'{n}ski,
Dynamical maps beyond Markovian regime,  \href{https://doi.org/10.1016/j.physrep.2022.09.003}{Physics Reports, \textbf{992} 1 (2022).}

\bibitem{Wol2008}   M.M. Wolf, J. Eisert, T.S. Cubitt, and J.I. Cirac, Assessing Non-Markovian Quantum Dynamics,
\href{https://journals.aps.org/prl/abstract/10.1103/PhysRevLett.101.150402} {Phys. Rev. Lett. \textbf{101}, 150402 (2008).}

\bibitem{Chr2010}   D. Chru\'{s}ci\'{n}ski, A. Kossakowski, and S. Pascazio, Long-time memory in non-Markovian evolutions,
\href{https://journals.aps.org/pra/abstract/10.1103/PhysRevA.81.032101} {Phys. Rev. A \textbf{81}, 032101 (2010).}

\bibitem{Wud2015}   D. Chru\'{s}ci\'{n}ski, F. A. Wudarski, Non-Markovianity degree for random unitary evolution,
\href{https://journals.aps.org/pra/abstract/10.1103/PhysRevA.91.012104} {Phys. Rev. A \textbf{91}, 012104 (2015).}

\bibitem{Wud2016}   F. A. Wudarski, D. Chru\'{s}ci\'{n}ski, Markovian semigroup from non-Markovian evolutions,
\href{https://journals.aps.org/pra/abstract/10.1103/PhysRevA.93.042120} {Phys. Rev. A \textbf{93}, 042120 (2016).}

\bibitem{Meg2017}   N. Megier, D. Chru\'{s}ci\'{n}ski, J. Piilo, W.T. Strunz, Eternal non-Markovianity: from random unitary to Markov chain realisations,
\href{https://www.nature.com/articles/s41598-017-06059-5} {Sci. Rep. \textbf{7}, 6379 (2017).}

\bibitem{Siudzinska_JPA20}   K. Siudzi\'{n}ska and D. Chru\'{s}ci\'{n}ski, Quantum evolution with a large number of negative decoherence rates,
\href{https://iopscience.iop.org/article/10.1088/1751-8121/aba7f2} {J. Phys. A: Math. Theor. \textbf{53}, 375305 (2020).}


\bibitem{Siudzinska_JPA22b}   K. Siudzi\'{n}ska 
Phase-covariant mixtures of non-unital qubit maps, \href{https://iopscience.iop.org/article/10.1088/1751-8121/ac909b}{ J. Phys. A: Math. Theor. 55 405303 (2022).}

\bibitem{Bre2018}    H.-P. Breuer, G. Amato, B. Vacchini, Mixing-induced quantum non-Markovianity and information flow,
\href{https://iopscience.iop.org/article/10.1088/1367-2630/aab2f9} {New J. Phys. \textbf{20}, 043007 (2018).}


\bibitem{Jag2020-1}   V. Jagadish, R. Srikanth, F. Petruccione, Convex combinations of Pauli semigroups: geometry, measure, and an application,
\href{https://journals.aps.org/pra/abstract/10.1103/PhysRevA.101.062304} {Phys. Rev. A \textbf{101}, 062304 (2020).}

\bibitem{Jag2020-2}   V. Jagadish, R. Srikanth, and F. Petruccione, Convex combinations of CP-divisible Pauli channels that are not semigroups,
\href{https://www.sciencedirect.com/science/article/pii/S037596012030774X} {Phys. Lett. A \textbf{384}, 126907 (2020).}

\bibitem{Chr2016}  D.~Chru{\'s}ci{\'n}ski and K.~Siudzi{\'n}ska, Generalized Pauli channels and a class of non-Markovian quantum evolution, 
\href{https://journals.aps.org/pra/abstract/10.1103/PhysRevA.94.022118}{Phys. Rev. A \textbf{94}, 022118 (2016).}

\bibitem{Siudzinska21}   K. Siudzi\'{n}ska, Markovian semigroup from mixing noninvertible dynamical maps,
	\href{https://journals.aps.org/pra/abstract/10.1103/PhysRevA.103.022605} {Phys. Rev. A \textbf{103}, 022605 (2021).}

\bibitem{Utagi21}   S. Utagi, V.N. Rao, R. Srikanth, and S. Banerjee, Singularities, mixing, and non-Markovianity of Pauli dynamical maps,
	\href{https://journals.aps.org/pra/abstract/10.1103/PhysRevA.103.042610} {Phys. Rev. A \textbf{103}, 042610 (2021).}

\bibitem{Jagadish2022}  V. Jagadish, R. Srikanth, and F. Petruccione, Noninvertibility as a requirement for creating a semigroup under convex combinationss of channels,
	\href{https://journals.aps.org/pra/abstract/10.1103/PhysRevA.105.032422} {Phys. Rev. A \textbf{105}, 032422 (2022).}
	
\bibitem{Jagadish_Measure}  V. Jagadish, R. Srikanth, and F. Petruccione, Measure of invertible dynamical maps under convex combinations of noninvertible dynamical maps,
	\href{https://journals.aps.org/pra/abstract/10.1103/PhysRevA.106.012438} {Phys. Rev. A \textbf{106}, 012438 (2022).}
	
	\bibitem{Siudzinska_JPA22}  K. Siudzi\'{n}ska, Non-Markovianity criteria for mixtures of noninvertible Pauli dynamical maps,
	\href{https://iopscience.iop.org/article/10.1088/1751-8121/ac65c0} {J. Phys. A: Math. Theor. \textbf{55}, 215201 (2022).}

\bibitem{Hou2021}   S. C. Hou, X. X. Yi, S. X. Yu, and C. H. Oh, Singularity of dynamical maps,
\href{https://journals.aps.org/pra/abstract/10.1103/PhysRevA.86.012101} {Phys. Rev. A \textbf{86}, 012101 (2012).}

	\bibitem{King}
	C.~King and M.~B. Ruskai, 
 Minimal entropy of states emerging from noisy quantum channels, \href{https://ieeexplore.ieee.org/document/904522/}{ IEEE Trans. Info. Theory \textbf{47}, 192--209
	(2001).}
	
	\bibitem{Landau}
	L.~J. Landau and R.~F. Streater, On Birkhoff's theorem for doubly stochastic completely positive maps of matrix algebras, \href{https://doi.org/10.1016/0024-3795(93)90274-R}{Linear Algebra Appl. \textbf{193}, 107--127
	(1993).}


	\bibitem{Bandyopadhyay02}S. Bandyopadhyay, P. O. Boykin,  V. Roychowdhury,  and F.  Vatan,  A new proof for the existence of mutually unbiased bases.  \href{https://doi.org/10.1007/s00453-002-0980-7}{Algorithmica, \textbf{34}(4), 512-528 (2002).}
	
	\bibitem{Fujiwara}
	A.~Fujiwara and P.~Algoet, 
 One-to-one parametrization of quantum channels, \href{https://journals.aps.org/pra/abstract/10.1103/PhysRevA.59.3290}{Phys. Rev. A \textbf{59}, 3290 (1999). }
 
 	\bibitem{Ruskai}
	M.~Nathanson and M.~B. Ruskai, Pauli diagonal channels constant on axes, \href{https://iopscience.iop.org/article/10.1088/1751-8113/40/28/S22}{J. Phys. A: Math. Theor. \textbf{40}, 8171
	(2007).}
	
	
	
	
\end{thebibliography}
\end{document}